 \newtheorem{prop}{Proposition}[section]
 \newtheorem{theorem}[prop]{Theorem}
 \newtheorem{lemma}[prop]{Lemma}
 \newtheorem{proposition}[prop]{Proposition}
\newtheorem{corollary}[prop]{Corollary}
\theoremstyle{definition}
 \newtheorem{definition}[prop]{Definition}
\newcommand{\pref}{\succ}
\newcommand{\thetatwo}{\Theta_2^{\mathrm{p}}}
\def\calA{{\mathcal{A}}}
\def\calB{\mathcal{B}}
\def\calF{\mathcal{F}}
\def\calG{\mathcal{G}}
\def\calS{\mathcal{S}}
\def\calP{\mathcal{P}}
\def\calR{\mathcal{R}}
\def\calT{\mathcal{T}}
\def\poly{\mathrm{poly}}
\def\vect{\mathbf{t}}
\def\vecs{\mathbf{s}}
\def\vecr{\mathbf{r}}
\def\vecd{\mathbf{d}}
\def\vecw{\mathbf{w}}
\def\vecalpha{{\vec\alpha}}
\def\eps{\varepsilon}
\newcommand{\comment}[1]{}
\newcommand{\shift}{\ensuremath{\mathit{shf}}}
\newcommand{\gain}{\ensuremath{\mathit{gain}}}
\newcommand{\bestbuy}{\ensuremath{\mathit{buy}}}
\newcommand{\p}{{\rm P}}
\newcommand{\np}{{\rm NP}}
\newcommand{\score}{{\ensuremath{\mathit{Sc}}}}
\newcommand{\opt}{{\ensuremath{\mathrm{opt}}}}
\newcommand{\Z}{\mathbb Z}
\newcommand{\Q}{\mathbb Q}
\begin{document}
\title{Approximation Algorithms for Campaign Management}
\author{Edith Elkind\\
   Division of Mathematical Sciences\\
   Nanyang Technological University\\ Singapore
\and 
   Piotr Faliszewski \\
   Department of Computer Science\\
   AGH Univ. of Science and Technology\\ Krak\'ow, Poland
}

\maketitle

\begin{abstract}
  We study electoral campaign management scenarios 
  in which an external party can buy votes, i.e., 
  pay the voters to promote its preferred candidate in their 
  preference rankings.
  The external party's goal is to make its preferred
  candidate a winner while paying as little as possible.
  We describe a 2-approximation algorithm for this problem
  for a large class of electoral systems known as 
  scoring rules. Our result holds even for weighted 
  voters, and has applications for campaign management 
  in commercial settings.
  We also give approximation algorithms for our problem
  for two Condorcet-consistent rules, namely, the Copeland
  rule and maximin.
\end{abstract}

\section{Introduction}\label{sec:intro}
Elections and voting play an important role in the functioning
of the modern society. In the standard model of voting, each voter's
preferences are represented by a total order over the alternatives
(candidates), and some voting rule is used to determine the election 
winner(s). However, in practice, the voters' preferences are often
flexible, and it is possible to affect the outcome of the election
by campaigning for or against a certain candidate. Indeed,   
campaign management is a multi-million dollar industry, and there is 
overwhelming evidence that the amount of money invested into
a candidate's campaign is strongly correlated with her
chances of winning the election.

The notion of {\em bribery} proposed by Faliszewski,
Hemaspaandra, and Hemaspaandra~\cite{fal-hem-hem:j:bribery}
can be viewed as a formal model of electoral campaign management.
In the model of~\cite{fal-hem-hem:j:bribery}, 
each (possibly weighted) voter is associated with a certain price, and, by
paying the price, the briber can change that voter's vote in any way
she likes. The briber's goal, then, is to get a particular candidate
elected, subject to a budget constraint. To connect this description
with our original campaign management scenario, observe that bribing a 
(weighted) voter can be interpreted as mounting an election campaign targeted
at a particular group of voters with identical preferences.

However, this interpretation does not take into account 
that in practice it may be relatively easy to convince a voter
to make small changes to his vote, but hard or impossible
to convince him to adopt an entirely new preference ordering.
To remedy this, several subsequent 
papers~\cite{fal:c:nonuniform-bribery,fal-hem-hem-rot:j:llull,elk-fal-sli:c:swap-bribery}
allow the briber to modify the voters' preferences in a more fine-grained manner. 
In~\cite{fal:c:nonuniform-bribery} and~\cite{fal-hem-hem-rot:j:llull}, 
the authors depart from the assumption that the voters
are represented by the preference orders.
We will therefore focus on the framework of~\cite{elk-fal-sli:c:swap-bribery},
which operates in the standard model of voting, and
assumes that the briber can pay each voter to swap any two candidates 
that are adjacent in that voter's ordering; this type of bribery is called 
{\em swap bribery}. In the context of campaign management, such a swap
corresponds to an ad that compares two particular candidates. 
A special case of swap bribery that was also
suggested in~\cite{elk-fal-sli:c:swap-bribery} is {\em shift
bribery}, where the briber is limited to buying swaps that
involve her preferred candidate; in effect, this is equivalent to
allowing the briber to shift her preferred candidate up in the voters'
preference orderings. The constraint that a campaign ad should
involve the briber's preferred candidate is very natural
from the ethics perspective; as we will see later, it also leads
to more tractable computational problems.

The complexity-theoretic study of swap and shift bribery was initiated
in~\cite{elk-fal-sli:c:swap-bribery}, where the authors show that the
associated computational problem is hard for many voting rules.
However, campaign management can be naturally viewed as an
optimization problem, and hence we can approach it using the framework
of approximation algorithms. This line of research was first suggested
in~\cite{elk-fal-sli:c:swap-bribery}, where the authors give a
2-approximation algorithm for shift bribery under the Borda rule. 
We expand the study of approximation algorithms for
campaign management to voting rules other than Borda, 
and to weighted voters.

Our main result is a 2-approximation algorithm for shift bribery under
all scoring rules (a large class of voting rules, which includes
Borda); our result holds even for weighted voters.  Under a scoring
rule, each candidate gets a certain number of points from each voter,
which is determined by that candidate's position in the voter's
preferences, and the winner is the candidate with the maximum number
of points. Unlike most of the existing algorithms for scoring rules
(see, e.g.,~\cite{fal-hem-hem:j:bribery}), our algorithm does not
assume that the number of candidates is constant, but rather accepts
the scoring vector as an input. Our proof has an unusual structure:
we first design a pseudopolynomial 2-approximation
algorithm for our problem, then convert it into a
$(2+\eps)$-approximation scheme, and finally
turn the $(2+\eps)$-approximation scheme into a 2-approximation
algorithm using a bootstrapping argument.

Interestingly, shift bribery under scoring rules provides mathematical
framework for campaign management scenarios that are not related to elections.
Consider, for example, an advertiser in a sponsored search setting
who wants to ensure that his ads get more clicks than those of the competitors, 
and is willing to make an additional investment in his campaign
to achieve that.
By associating the competing ads with candidates, search terms
with (weighted) voters, and scores for position $i$ 
with clickthrough rates for an ad in position $i$, 
we can reduce the advertiser's problem to shift bribery with weighted voters.
This example suggests that our 2-approximation algorithm can be used
for campaign management in a variety of settings, including---but not limited 
to---voting.

We also give approximation algorithms for shift bribery under two
voting rules that have the attractive property of {\em Condorcet
  consistency}, namely, the Copeland rule and maximin.\footnote{We
  mention that an earlier version of this paper also included results
  on Bucklin voting. Unfortunately, the proofs contained errors and
  thus we removed the section on Bucklin from the paper.}  To
complement our positive results, we show that, in contrast to shift
bribery, swap bribery is usually hard to approximate. We conclude the
paper by suggesting directions for future work.

\section{Preliminaries}

In this section we describe relevant notions from computational social
choice and define the shift bribery problem.  We take $\Z^+$ to be the
set of all nonnegative integers.

\medskip
\noindent
\textbf{Elections.}\quad An {\em election} is a pair $E = (C,V)$,
where $C = \{c_1, \ldots c_m\}$ is the set of {\em candidates} and $V
= (v^1, \ldots, v^n)$ is a collection of {\em voters}. Each voter
$v^i$ is described by her {\em preference order} $\pref^i$, which is a
strict linear order over $C$: $c\pref^i c'$ means that voter $v^i$
prefers $c$ to $c'$. 
We will also consider settings where each voter $v^i$ has a {\em weight}
$w_i$; in this case, her vote is interpreted as $w_i$ votes.

A {\em voting rule} is
a function that given an election $E = (C,V)$ outputs a set 
$W \subseteq C$ of {\em election winners}. Note that we do not
require the voting rule to produce a unique winner, i.e., we
work in the so-called {\em nonunique-winner model}. This approach 
is standard in computational social choice literature, as it allows
to define classic voting rules (see below) in a more natural manner.
In practice, this means that a voting rule may have to be combined
with a tie-breaking rule.

\smallskip
\noindent
\textbf{Voting Rules.}\quad
We will now describe several well-known voting rules that 
will be considered in this paper.
All voting rules listed below are defined 
for an election $E = (C,V)$ with $C = \{c_1, \ldots, c_m\}$, 
$V = (v^1, \ldots, v^n)$. 
For all rules defined in terms of scores (points),   
the winner(s) are the candidate(s) with the maximum score 
(highest number of points).

\begin{description}
\item[Scoring rules.] A {\em scoring rule} $\calR_\vecalpha$
  is described by a vector $\vecalpha = (\alpha_1, \ldots, \alpha_m)$, 
  where $\alpha_i\in\Z^+$ for $i=1, \dots, n$, 
  and $\alpha_1\ge\dots\ge \alpha_m$.
  Under $\calR_\vecalpha$, each
  candidate $c_i$ receives $\alpha_j$ points from each voter that
  ranks him in the $j$-th position. 
  Note that each scoring rule is defined for a
  fixed number of candidates. Thus, we often consider voting rules
  that are defined by {\em families} of scoring rules $(\vecalpha^m)_{m=1, 2, \dots}$, 
  with one vector for each number of candidates. In particular, 
  {\em Borda}
  is the rule given by $\alpha^m_j=m-j$ for $j=1, \dots, m$, 
  and {\em $k$-approval} is the rule given by 
  $\alpha^m_j=1$ for $j\le k$, $\alpha^m_j=0$ for $j>k$;
  $1$-approval is also known as {\em plurality}.

\item[Condorcet consistent rules.] For any $c_i, c_j \in
  C$, let $N_E(c_i,c_j)$ denote the number of voters in $E$ who
  prefer $c_i$ to $c_j$. If $N_E(c_i,c_j) > N_E(c_j,c_i)$, then we say
  that $c_i$ wins the {\em pairwise election} against $c_j$.
  A candidate $c\in C$ is called the {\em Condorcet winner}
  if he wins the pairwise elections against all other candidates in $C$.
  Note that some elections may not have a Condorcet winner.
  We say that a voting rule $\calR$ is {\em Condorcet-consistent} 
  if for any election $E$ that has a Condorcet winner $c$ we have $\calR(E)=\{c\}$.
  Two examples of Condorcet-consistent rules are
  Copeland and maximin, defined as follows. 
  For any rational $\alpha\in[0, 1]$, 
  {\em Copeland\,$^\alpha$} grants one point to a candidate $c_i\in C$ 
  for each pairwise elections that $c_i$ wins,
  and $\alpha$ points for each pairwise election that $c_i$ ties. 
  The {\em maximin score} of $c_i$ is the number of votes that $c_i$ receives
  in her worst pairwise election, i.e., 
  $\min_{c_j \in C\setminus\{c_i\}}N_E(c_i,c_j)$.

\end{description}
We denote by $\score^\calR_E(c)$ the
$\calR$-score of a candidate $c \in C$ in an election $E=(C, V)$;
we omit the superscript $\calR$ when the voting rule 
is clear from the context.

\medskip
\noindent
\textbf{The Shift Bribery Problem.}\quad
This section is based on the definitions 
from~\cite{elk-fal-sli:c:swap-bribery}.
Consider an election $E =
(C,V)$ with $C = \{p, c_1, \ldots, c_{m-1}\}$, $V = (v^1, \ldots,
v^n)$. Suppose that our goal is to ensure that the designated candidate $p$ is
a winner of the election under a voting rule $\calR$. In order to achieve
this goal, we can ask
each voter $v^i$ to shift $p$ upwards in her vote by a certain number
of positions. This models the fact that we can campaign in favor of
$p$. However, each such shift has a cost. Specifically, each
voter $v^i$ has a {\em cost function} $\pi^i:\Z^+\rightarrow\Z^+$, 
where $\pi^i(k)$, $k \in \Z^+$, is the cost of shifting $p$ upwards 
by $k$ positions in $\pref^i$.
We require that each $\pi^i$, $i=1, \dots, n$, satisfies
$\pi^i(0) = 0$ and $\pi^i(k)\le \pi^i(k+1)$ for $k\in\Z^+$.
Also, when $v^i$ ranks $p$ in position $t$, 
we require $\pi^i(s) = \pi^i(t-1)$ for all $s\ge t$; thus, 
the function $\pi^i$ is fully specified by its values at $1, \dots, t-1$.
Note that we assume that $\pi^i(k)<\infty$ for all    
$i=1, \dots, n$ and $k\in\Z^+$. However, all
our proofs can be generalized to the case where
$\pi^i$ can be $+\infty$ (i.e., some voters
cannot be bribed to move $p$ by more than some given number of positions).
We seek an action that makes $p$ a winner at the
minimum cost. %

\begin{definition}
  Let $\calR$ be a voting rule. An instance of {\sc $\calR$-shift-bribery} problem
  is a tuple $I = (C, V,\Pi, p)$, where
  $C = \{p,c_1, \ldots, c_{m-1}\}$, 
  $V = (v^1, \ldots, v^n)$ is a collection of preference orders over $C$, 
  $\Pi = (\pi^1, \ldots, \pi^n)$ is a family of cost functions, 
  and $p\in C$ is a designated candidate. 
  The goal is to find a minimal value $b$ such that there is a sequence 
  $\vect = (t_1, \ldots, t_n) \in (\Z^+)^n$ with the following properties:
  (a) $b = \sum_{i=1}^{n}\pi^i(t_i)$, and (b) if for each $i=1, \dots, n$
  we shift $p$ upwards in the $i$-th vote by $t_i$
  positions, then $p$ becomes an $\calR$-winner of $E$.
  We denote this value of $b$ by $\opt(I)$.
\end{definition}

In {\sc weighted $\calR$-shift-bribery}, the description of the instance includes
a vector of voters' weights $\vecw=(w_1, \dots, w_n)$, i.e., we have
$I = (C, V,\Pi, p, \vecw)$. 

We will call the sequence $\vect = (t_1, \ldots, t_n)$ a
{\em shift-action}.  Let $\shift(C,V,\vect)$ denote the election obtained
from $(C,V)$ by shifting $p$ upwards by $t_i$
positions in the $i$-th vote (or placing $p$ on top of that vote, if 
$v^i$ ranks $p$ in position $t<t_i+1$ before the bribery). 
A shift-action is {\em successful} if $p$ is a winner of
$\shift(C,V,\vect)$.  Additionally, let $\Pi(\vect)=\sum_{i=1}^n\pi^i(t_i)$.

Let $I = (C,V,\Pi,p)$ be an instance of $\calR$-\textsc{shift-bribery}
and let $\vect = (t_1, \ldots, t_n)$. Overloading notation, we let
$\shift(I,\vect)$ denote an instance 
$\hat{I} = (C,\hat{V},\hat{\Pi},p)$ of $\calR$-\textsc{shift-bribery}
given by
(a) $(C,\hat{V}) = \shift(C,V,\vect)$, and
(b) $\hat{\Pi} = (\hat{\pi}^1, \ldots, \hat{\pi}^n)$ where for each $i=1, \dots, n$
   we have $\hat{\pi}^i(k) = \pi^i(k+t_i)-\pi^i(t_i)$.
That is, $\shift(I,\vect)$ represents the instance of shift bribery 
obtained from $I$ by applying the shift-action $\vect$; the costs are modified
to reflect the fact that some shifts have already been performed.

Given an instance $I$ of $\calR$-\textsc{shift-bribery}
or {\sc weighted} $\calR$-\textsc{shift-bribery},   
we denote by $|I|$ the representation size of $I$ assuming that 
all entries of $\Pi$ (and $\vecw$, for the weighted case) 
are given in binary.
Similarly, $|\vecalpha|$ denotes the number of bits
in the binary encoding of a scoring vector $\vecalpha$.

\iffalse
%
\medskip
\noindent
\textbf{Complexity Theory and Approximation Algorithms.}\quad
We assume that the reader is familiar with standard notions regarding
algorithms and complexity theory. However, for the sake of clarity and
to establish notation, let us now briefly recall the most essential
definitions regarding approximation algorithms.  Let us consider some
problem $\calP$ where the goal is to find a solution that minimizes
some function $f$.  Given an instance $I$ of $\calP$, by $\opt(I)$ we
mean the value of an optimal solution for $I$. That is, $\opt(I) =
\min\{ f(s) \mid s \mbox{ is a solution for } I\}$. (We take $\opt(I)
= \infty$ if no solution exists, however, for the problem we consider,
there always is some solution.)  Let $\calA$ be an algorithm that
given an instance $I$ of $\calP$ always outputs a valid solution
$\calA(I)$. We say that $\calA$ is a $t$-approximation algorithm if it
holds that for each instance $I$, we have $\opt(I) \leq f(\calA(I))
\leq t\opt(I)$. An algorithm $\calB$ is a fully polynomial
approximation scheme (FPTAS) if for each input of the form
$(I,\varepsilon)$ where $I$ is an instance of $\calP$ and
$\varepsilon$ is a positive rational number, $\calB$
runs in time polynomial in $|I|$ and $1 \over \varepsilon$ and
$\opt(I) \leq f(\calA(I)) \leq (1+\varepsilon)\opt(I).$

%
%
%
%
%
%
%
%
\fi

\section{Scoring Rules}\label{sec:scoring}
In this section, we describe a 2-approximation algorithm that works
for all scoring rules. 
\iffalse
It is based on the 2-approximation algorithm
for the Borda rule presented in~\cite{elk-fal-sli:c:swap-bribery};
however, to generalize the ideas of~\cite{elk-fal-sli:c:swap-bribery}
to arbitrary scoring rules we need several additional ingredients.
\fi

\begin{theorem}\label{thm:scoring2}
  There is an algorithm $\calB$ that given a
  scoring rule $\vecalpha = (\alpha_1, \ldots, \alpha_m)$ and an
  instance $I = (C,V,\Pi,p)$ of $\calR_\vecalpha$-\textsc{shift-bribery} 
  with $|C| = m$, outputs 
  a successful shift-action $\vect$ for $I$ 
  that satisfies $\Pi(\vect) \leq 2\opt(I)$, and runs
  in time $\poly(|I|,|\vecalpha|)$.
\end{theorem}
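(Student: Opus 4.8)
The plan is to follow the three-stage route advertised above: first a pseudopolynomial $2$-approximation, then a $(2+\eps)$-approximation scheme, then a bootstrapping step back to a genuine $2$-approximation. I would open with two structural observations. First, shifting $p$ upward in a vote can only raise $\score(p)$ and can only lower $\score(c)$ for every $c\neq p$; hence $\vect$ is successful if and only if $\score(p)\ge\score(c)$ for all $c$ in $\shift(C,V,\vect)$, and any shift-action coordinatewise dominating a successful one is itself successful (so in particular a feasible solution always exists --- shift $p$ to the top of every vote --- and $\opt(I)<\infty$). Second, $\opt(I)$ equals the minimum, over integer target scores $s$ with $\score_{(C,V)}(p)\le s\le n\alpha_1$, of the cost of the cheapest shift-action that makes $\score(p)\ge s$ and keeps $\score(c)\le s$ for every $c\neq p$: if $\vect$ is successful with final $p$-score $s'$ it is feasible for the target-$s'$ subproblem, and conversely any solution of a target-$s$ subproblem is successful. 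There are only pseudopolynomially many targets $s$.

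For the pseudopolynomial algorithm I would iterate over all targets $s$ and solve the corresponding subproblem (approximately), returning the best result. The difficulty --- and the reason one gets factor $2$ rather than $1$ --- is that for a fixed $s$ the task ``raise $p$ to score $s$'' and the task ``drag every over-threshold opponent down to score $\le s$'' are coupled: one upward shift of $p$ past a block of candidates simultaneously feeds $p$'s gain and inflicts position-dependent losses on several opponents, and there can be many opponents, so a dynamic program literally tracking each opponent's residual deficit has exponential state. The fix is to pass to a relaxation in which the whole family of opponent upper bounds is replaced by an aggregate requirement that they jointly imply up to a factor of $2$; then a DP over the $n$ votes whose state records only $p$'s accumulated gain and the accumulated aggregate opponent loss --- each bounded by $n\alpha_1$ --- suffices, with the per-vote transition read off from the monotone step function $k\mapsto\bigl(\pi^i(k),\ \text{gain of }p,\ \text{loss inflicted on opponents}\bigr)$. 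One then verifies that the resulting shift-action is successful and has cost at most $2\opt(I)$, using that an optimal shift-action for $I$ is feasible for both the $p$-side and the opponent-side of the relaxation; superimposing the two partial shift-actions into one is done coordinatewise, and monotonicity guarantees that success is preserved and the cost does not increase.

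The last two stages are comparatively routine. To kill the pseudopolynomiality I would rescale the scoring vector, rounding each $\alpha_j$ down to a multiple of a suitable power of two chosen according to $\eps$; this collapses the number of relevant targets and the DP state space to $\poly(|I|,|\vecalpha|,1/\eps)$ while perturbing every score --- hence $\opt$ --- by at most a factor $1+\eps$, giving a $(2+\eps)$-approximation running in time polynomial in $|I|$, $|\vecalpha|$, and $1/\eps$. For the bootstrapping step, run this scheme with $\eps$ small enough (still only polynomially many bits) that the value it returns confines $\opt(I)$ to a multiplicative window so narrow that only polynomially many targets $s$ lie in the relevant range; one exact pass over those $s$, re-running the same relaxed DP but now exactly (its parameters are polynomial), produces a shift-action of cost at most $2\opt(I)$ in time $\poly(|I|,|\vecalpha|)$.

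The step I expect to be the real obstacle is the relaxation inside the pseudopolynomial algorithm: choosing the right aggregate surrogate for the many opponent constraints and proving (i) it is implied by the true constraints up to a factor $2$, (ii) it keeps the DP state polynomially bounded, and (iii) gluing the $p$-raising and opponent-lowering partial solutions yields a successful shift-action of at most twice optimal cost. Point (iii) is what forces genuinely new work beyond the Borda case of~\cite{elk-fal-sli:c:swap-bribery}, since for a general scoring vector a single jump over an opponent can cost it an arbitrarily large, position-dependent number of points, so the bookkeeping that is trivial for Borda's uniform vector has to be redone with care.
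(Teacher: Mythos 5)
The decisive gap is in your first stage. For a fixed target score $s$ you split the task into (a) raising $p$ to at least $s$ and (b) pushing every opponent to at most $s$, and then propose to replace the family (b) of per-opponent constraints by an unnamed ``aggregate requirement'' tracked in the DP alongside $p$'s gain. No aggregate of opponent losses can do the job you assign to it: success is a condition on the \emph{maximum} opponent score, and a state recording only the total loss inflicted on opponents cannot certify that each over-threshold opponent was individually brought down, nor do you exhibit any factor-2 implication between such an aggregate and the true constraints --- you yourself flag this as the unresolved obstacle, and it is precisely the heart of the proof. The paper resolves it by eliminating the opponent side altogether: Lemma~\ref{lem:any2} shows that if some successful shift-action raises $p$'s score by $k$, then (since every point $p$ gains is a point lost by the candidate jumped over) $\max_{c}[\score_{(C,V)}(c)-\score_{(C,V)}(p)]\le 2k$, so \emph{any} shift-action raising $p$ by $2k$ is successful. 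Consequently the pseudopolynomial algorithm of Proposition~\ref{prop:pseudo} never tracks opponents or score targets at all: its DP is over voters and \emph{money spent} (so it is pseudopolynomial in $\sum_{i=1}^n\pi^i(m)$, not in $n\alpha_1$), invoked in two budget-bounded ``best-buy'' passes $\ell_1,\ell_2$, and the $2\opt$ bound comes from a delicate common-part argument (setting $\ell_1=\Pi(\vect)$, $r_i=\min\{t_i,s'_i\}$, $\ell_2=\Pi'(\vect-\vecr)$ and showing the combined gain over $E^r$ is $2(k-k^r)$). Even granting your decomposition, side (b) is a covering problem with $m-1$ coupled constraints for which you give no (pseudo)polynomial algorithm, so the 2-approximation never gets off the ground; note also that the two-pass structure is not cosmetic --- Theorem~\ref{thm:single} shows a single budget pass can be nearly twice worse.

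Your last two stages also do not go through as written. Rounding the \emph{scoring vector} changes feasibility, not just the objective: under rounded scores the winner set changes, so the action returned need not be successful for the true rule, and $\opt$ (a cost) can jump by an unbounded factor under an arbitrarily small score perturbation; the paper instead scales the \emph{cost functions} (guessing $\rho$, the largest single-shift cost in the optimum, over $\lceil\log R\rceil$ doublings and setting $K=\rho\eps/n$), which is sound exactly because the scaled instance differs from $I$ only in prices, so every solution it returns is successful for $I$. Finally, knowing $\opt$ within a multiplicative $(2+\eps)$ window does not make ``only polynomially many targets $s$'' relevant nor shrink the DP: the table size is governed by the scores or the costs, not by the value of $\opt$, and a factor-2 window around a binary-encoded cost still contains exponentially many integers. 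The paper's bootstrap is different: by pigeonhole some voter in the optimal $\vect$ has $\pi^i(t_i)\ge\frac{1}{n}\Pi(\vect)$; one guesses $(i,t_i)$ (only $nm$ possibilities), pre-applies that shift, runs the $(2+\eps)$-scheme with $\eps=\frac{1}{n}$ on the residual instance, and the exactly-paid $\pi^i(t_i)$ absorbs the additive $\eps\Pi(\vect)$ error, yielding a genuine factor 2 in time $\poly(|I|,|\vecalpha|)$.
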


We split the proof of Theorem~\ref{thm:scoring2} into three steps.
First (Proposition~\ref{prop:pseudo}) we describe a pseudopolynomial 
$2$-approximation algorithm $\calA$ for our problem. Then 
(Proposition~\ref{prop:approx}) we use 
$\calA$ to construct another algorithm $\calA'$, which for any $\eps>0$
produces a $(2+\eps)$-approximation and runs in time polynomial
in the instance size and $\frac{1}{\eps}$. Finally, we convert
$\calA'$ into a $2$-approximation algorithm by bootstrapping. 
Throughout the proof, we 
fix a scoring rule $\calR_\vecalpha = (\alpha_1, \ldots, \alpha_m)$.

We first prove a preliminary lemma that will be used to
demonstrate the correctness of our approach.

  \begin{lemma}\label{lem:any2}
    Let $\vecs = (s_1, \ldots, s_n)$ be a successful shift-action for 
    $(C, V, \Pi, p)$, 
    and let $k= \score_{\shift(C, V,\vecs)}(p) - \score_{(C, V)}(p)$.
    Then every shift-action $\vecr = (r_1, \ldots, r_n)$ such that  
    $\score_{\shift(C, V,\vecr)}(p) = \score_{(C, V)}(p)+2k$ 
    is successful for $(C, V, \Pi, p)$.
  \end{lemma}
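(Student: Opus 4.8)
The plan is to exploit the ``zero-sum'' nature of upward shifts under a scoring rule: every elementary swap that moves $p$ from position $j$ to position $j-1$ in some vote raises $\score(p)$ by $\alpha_{j-1}-\alpha_j\ge 0$ and lowers, by exactly the same amount, the score of the single candidate that $p$ overtakes, leaving every other candidate's score untouched. Writing a shift-action as a sequence of such elementary swaps and summing over all of them, I would conclude that for any shift-action the total increase of $\score(p)$ equals the total decrease of $\sum_{c\neq p}\score(c)$; in particular, no single candidate $c\neq p$ can lose more points than $p$ gains. This step needs the monotonicity $\alpha_1\ge\cdots\ge\alpha_m$ (so each per-swap transfer is nonnegative) and a short induction on the number of elementary swaps to cover multi-position shifts and votes in which $p$ reaches the top; it is the only part of the argument that requires care.

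Set $P=\score_{(C,V)}(p)$, and for a shift-action $\vect$ let $\delta^{\vect}_c\ge 0$ denote the decrease of $\score(c)$ in passing from $(C,V)$ to $\shift(C,V,\vect)$, for each $c\neq p$. The conservation observation gives $\sum_{c\neq p}\delta^{\vecs}_c=k$, hence $\delta^{\vecs}_c\le k$ for every $c\neq p$. Since $\vecs$ is successful, for each $c\neq p$ we have $P+k=\score_{\shift(C,V,\vecs)}(p)\ge\score_{\shift(C,V,\vecs)}(c)=\score_{(C,V)}(c)-\delta^{\vecs}_c$, so $\score_{(C,V)}(c)\le P+k+\delta^{\vecs}_c\le P+2k$.

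Now take any $\vecr$ with $\score_{\shift(C,V,\vecr)}(p)=P+2k$. Because shifting $p$ up never raises any other candidate's score, for every $c\neq p$ we get $\score_{\shift(C,V,\vecr)}(c)=\score_{(C,V)}(c)-\delta^{\vecr}_c\le\score_{(C,V)}(c)\le P+2k=\score_{\shift(C,V,\vecr)}(p)$. Thus $p$ attains the maximum score in $\shift(C,V,\vecr)$, i.e., $p$ is a winner in the nonunique-winner model, so $\vecr$ is successful, as claimed.

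Everything after the conservation claim is a two-line chain of inequalities, so I expect that claim (and its correct inductive bookkeeping) to be the main obstacle. I would also remark that the same argument carries over verbatim to weighted voters, simply multiplying each per-vote transfer of points by the voter's weight.
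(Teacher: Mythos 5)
Your proposal is correct and follows essentially the same route as the paper's proof: you decompose a shift into elementary swaps that transfer $\alpha_i-\alpha_{i+1}\ge 0$ points from the overtaken candidate to $p$, deduce from the success of $\vecs$ that every other candidate's initial score exceeds $p$'s by at most $2k$, and then note that $\vecr$ raises $p$ by $2k$ while never raising anyone else. The only difference is that you make the intermediate bound $\delta^{\vecs}_c\le k$ explicit, which the paper leaves implicit.
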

  \begin{proof}
    When $p$ is shifted
    from position $i+1$ to position $i$ in some vote $\pref^j$, he obtains
    $\alpha_i-\alpha_{i+1}$ extra points, while the candidate $c$ that
    was in position $i$ in $\pref^j$ prior to the shift loses
    $\alpha_i-\alpha_{i+1}$ points; the scores of all other candidates
    remain unchanged. Since $\vecs$ increases $p$'s score by $k$
    and $p$ wins in $\shift(C, V, \vecs)$, we have 
    $\max_{c \in C}[\score_{(C, V)}(c)-\score_{(C, V)}(p)]\leq 2k$. 
    Now, $\vecr$
    increases $p$'s score by $2k$ points, and does not increase the score
    of any other candidate, so the lemma follows.
  \end{proof}

We are now ready to implement the first step of our plan. The
algorithm $\calA$ presented in the next proposition is inspired by 
the 2-approximation algorithm
for the Borda rule that appears in~\cite{elk-fal-sli:c:swap-bribery};
however, its analysis is substantially different.

\begin{proposition}\label{prop:pseudo} 
  There exist an algorithm $\calA$ that given                     
  a scoring rule $\vecalpha = (\alpha_1, \ldots, \alpha_m)$ and an
  instance $I = (C,V,\Pi,p)$ of $\calR_\vecalpha$-\textsc{shift-bribery}
  with $|C| = m$, outputs
  a successful shift-action $\vect$ for $I$
  that runs in time $\poly(|I|,|\vecalpha|, \sum_{i=1}^n\pi^i(m))$
  and satisfies $\Pi(\vect) \leq 2\opt(I)$.
\end{proposition}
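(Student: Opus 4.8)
The plan is to exploit Lemma~\ref{lem:any2}: if we can find, for the "right" target score increase, the cheapest shift-action achieving that increase, then doubling the target gives a successful action costing at most twice the optimum. Concretely, let $\vect^*$ be an optimal shift-action for $I$, let $k^* = \score_{\shift(C,V,\vect^*)}(p) - \score_{(C,V)}(p)$ be the score increase it produces, and let $b^* = \opt(I)$ be its cost. The key subroutine I would build is a dynamic program that, given a target increment $K \in \Z^+$, computes a minimum-cost shift-action $\vecr$ with $\score_{\shift(C,V,\vecr)}(p) - \score_{(C,V)}(p) \ge K$ (or reports infeasibility). Running this for $K = 2k^*$ and invoking Lemma~\ref{lem:any2} yields a successful action; and since $\vect^*$ doubled (i.e.\ shifting $p$ roughly twice as far, capped at the top) is itself a feasible solution of the $K = 2k^*$ subproblem of cost at most $2b^*$, the DP's output has cost at most $2\opt(I)$.

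The second step is to remove the assumption that we know $k^*$. The quantity $k^*$ is an integer between $0$ and the maximum possible score gain, which is at most $\alpha_1 \cdot n$ (each voter contributes at most $\alpha_1 - \alpha_m \le \alpha_1$ to $p$'s gain) — this is polynomially bounded in $|\vecalpha|$ and $n$, hence the algorithm can simply iterate over all candidate values $K' = 0, 1, 2, \ldots$ up to that bound, run the DP with target $2K'$ for each, check whether the resulting action is successful (which it will be once $K' \ge k^*$ by the lemma), and among all successful ones output the cheapest. Actually it suffices to take the first successful one encountered, since the DP cost is monotone in the target; either way we land on an action of cost $\le 2\opt(I)$ because at $K' = k^*$ the cost is $\le 2b^*$.

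The third and most technical step is designing the DP itself. Here I would process voters $v^1, \ldots, v^n$ one at a time, maintaining a table indexed by the total score increase accumulated so far, storing for each achievable increase the minimum cost. For voter $v^i$, shifting $p$ from position $t$ up by $j$ positions (for $j = 0, \ldots, t-1$, where $t$ is $p$'s position in $\pref^i$) yields a specific score gain $g_i(j) = \alpha_{t-j} - \alpha_t$ for $p$ at cost $\pi^i(j)$, so the transition is a standard "add one item with several options" update. The score-increase axis ranges over $0, \ldots, \alpha_1 n$, so the table has polynomially many entries, and there are $n$ voters and at most $m$ options per voter, giving total running time $\poly(n, m, |\vecalpha|)$; the cost values stored are bounded by $\sum_i \pi^i(m)$ and manipulated in time $\poly(\log \sum_i \pi^i(m))$, accounting for the $\sum_i \pi^i(m)$ factor in the claimed bound.

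The main obstacle I anticipate is a subtlety in Lemma~\ref{lem:any2}: it requires a shift-action whose induced score increase for $p$ is \emph{exactly} $2k^*$, whereas the natural DP finds actions with increase \emph{at least} a target. I would handle this by arguing that one can always "undo" part of a shift-action to hit any intermediate score value exactly — score increases happen in unit-ish steps controlled by consecutive $\alpha$-gaps, but one has to be careful when those gaps are large. The cleaner route, which I would take, is to observe that Lemma~\ref{lem:any2}'s proof actually shows $\max_c[\score(c) - \score(p)] \le 2k^*$, so \emph{any} action raising $p$'s score by at least $2k^*$ without helping anyone else is successful; then the "$\ge K$" DP is exactly what we need and the exactness issue evaporates. (Re-reading the lemma statement, it is phrased with equality, so in the writeup I would either restate it with "$\ge$" or route through this observation.) The remaining routine verification is that doubling an optimal action is feasible for the $K = 2k^*$ instance and costs at most $2\opt(I)$, using monotonicity of the $\pi^i$ and the position-capping convention.
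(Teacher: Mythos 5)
Your reduction of the problem to ``find a minimum-cost shift-action whose gain is at least $2k^*$'' has a genuine gap at its central step: the claim that, for $K'=k^*$, this subproblem has a feasible solution of cost at most $2\opt(I)$, justified by ``doubling'' the optimal action. Doubling fails on both counts you need: the gain need not double, because $p$ may already be capped at the top of some votes and because the gaps $\alpha_{j}-\alpha_{j+1}$ higher up in the scoring vector can be smaller than the ones already crossed; and the cost need not stay within a factor $2$, because the $\pi^i$ are only monotone, so $\pi^i(2t_i)$ can vastly exceed $2\pi^i(t_i)$. Worse, no repair is possible at this level, since an action of cost $\le 2\opt(I)$ with gain $2k^*$ may simply not exist. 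Concretely, take Borda with $C=\{p,x,a,b\}$ and two voters, both voting $x\succ p\succ a\succ b$, with $\pi^1(1)=1$ and $\pi^2(1)=100$: the initial scores are $x:6$, $p:4$, shifting $p$ once in the first vote ties $p$ with $x$, so $\opt(I)=1$ and $k^*=1$; but each vote can contribute at most one point of gain, so the cheapest action with gain $\ge 2$ costs $101$, while the target-$0$ call returns the empty (unsuccessful) action. Your algorithm, which probes only the doubled targets $2K'$, therefore outputs an action of cost $101\gg 2\opt(I)$, and the ratio is unbounded. This is exactly the difficulty the paper's proof is engineered to avoid: it never applies Lemma~\ref{lem:any2} from the original election. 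Instead it spends $\ell_1=\opt(I)$ greedily via $\bestbuy$, forms the common part $\vecr$ of the optimal action $\vect$ and the greedy action $\vecs'$, and applies Lemma~\ref{lem:any2} relative to the intermediate election $E^r=\shift(C,V,\vecr)$, where the reference successful action $\vect-\vecr$ has gain only $k-k^r$; a second greedy stage of cost $\ell_2\le\opt(I)$ then pushes the gain over $E^r$ to $2(k-k^r)$. (Your reading of the lemma with ``$\ge$'' in place of ``$=$'' is fine and not the issue.) Note also that even the natural repair of your scheme---sweep over all targets $K$ and keep the cheapest successful output---is essentially the single-loop algorithm $\calG$ of Appendix~\ref{app:single-loop} in dual form, and the paper explicitly states that it does not know whether $\calG$ is a $2$-approximation; so your route would need a genuinely new argument, not just the doubling claim.

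There is a second, independent problem with the running time. Your dynamic program is indexed by the accumulated score gain, which ranges up to $n\alpha_1$; since $\vecalpha$ is encoded in binary, $\alpha_1$ can be exponential in $|\vecalpha|$, so the table is not of size $\poly(|I|,|\vecalpha|)$ as you assert. The proposition permits pseudopolynomial dependence only on $\sum_{i=1}^n\pi^i(m)$, not on the score magnitudes, and this distinction matters downstream: the scaling step of Proposition~\ref{prop:approx} shrinks the costs, not the scores, and the weighted case (Corollary~\ref{cor:weights}) multiplies scoring vectors by weights, so a gain-indexed table would break both. The paper's DP avoids this by indexing states on the amount of money spent ($F(i,j)$ with $j\le\ell\le\sum_i\pi^i(m)$) and maximizing the gain, which is why its dependence on $\vecalpha$ is only through $|\vecalpha|$.
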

\begin{proof}
  Consider an instance $I = (C,V,\Pi,p)$ of
  $\calR_\vecalpha$-\textsc{shift-bribery} such that $C = \{p,c_1, \ldots,
  c_{m-1}\}$ and $V = (v^1, \ldots, v^n)$, and set $E=(C, V)$.
  For each integer $\ell \geq 0$, and each instance $J =
  (C,V',\Pi',p)$ of $\calR_\vecalpha$-\textsc{shift-bribery}, 
  set $\bestbuy(J,\ell)=\shift(C, V', \vect^\ell)$, where
  $\vect^\ell$  is chosen so as to maximize $p$'s score
  subject to the constraint $\Pi'(\vect^\ell)\le \ell$, i.e.,  
  $\vect^\ell\in\arg\max\{
    \score_{\shift(C, V', \vect)}(p)\mid \Pi'(\vect) \leq\ell
    \}$.

  \begin{lemma}\label{lem:bestbuy}
    For any $\ell\ge 0$, the instance $\bestbuy(I,\ell)$ is
    computable in time $\poly(|I|, \ell)$.
  \end{lemma}
  \begin{proof}
    For $i=1, \dots, n$, $k=0, \dots, m$, 
    let $\gain^i(k)$ denote
    the number of points that $p$ gains from being shifted upwards
    by $k$ positions in $\pref^i$ (or, to the top
    position in $\pref^i$, if $k$ is too large).

    Our algorithm relies on dynamic programming.  For each 
    $i=0, \dots, n$, $j=0, \dots, \ell$, let $F(i,j)$ denote the
    maximum increase in $p$'s score that can be achieved by spending exactly
    $j$ dollars on bribing the first $i$ voters.
    If it is impossible to spend exactly $j$ dollars 
    on bribing the first $i$ voters, set $F(i,j) = -\infty$.
    Naturally, we have $F(0,0) = 0$ and $F(0,j) = -\infty$ for 
    $j=1, \dots, \ell$.  Further, it is easy to see that for
    each $i=1, \dots, n$ and $j=0, \dots, \ell$ we have
    $$
    F(i,j) = \max\{F(i-1,j-\pi^i(k))+\gain^i(k)\mid 
              k =0, \ldots, m, j \geq \pi^i(k)\}.
    $$
    Thus, we can compute $F(n,j)$  
    for all $j=0, \dots, \ell$ in time $\poly(|I|, \ell)$.
    Now pick a value $j\in\{0, \dots, \ell\}$ that maximizes
    $F(n, j)$.
    Using standard techniques, we can 
    find a shift-action $\vect$ that corresponds to $F(n,j)$.
    Clearly, we have $\bestbuy(I,\ell) = \shift(C, V, \vect)$.
    \renewcommand{\qedsymbol}{(Lemma~\ref{lem:bestbuy})~$\square$}
\end{proof}

  \noindent
  The pseudocode for algorithm $\calA$ is given in
  Figure~\ref{fig:algA}. Lemma~\ref{lem:bestbuy} implies
  that the running time of $\calA$ is polynomial in
  $|I|$ and $\sum_{i=1}^{n}\pi^i(m)$. It remains to show that 
  $\calA$ indeed produces a $2$-approximate solution.

  \begin{figure}
    \begin{center}
      \begin{tabbing}
        123\=123\=123\=123\=123\=\kill
        \> \textbf{procedure} $\calA(C, V, \Pi, p)$\\
        \> \textbf{begin}\\
        \>\> Set $m = |C|$, $n = |V|$, $M = \sum_{i=1}^n\pi^i(m)$, $b = \infty$; \\
        \>\> \textbf{for} $\ell_1 = 0$ \textbf{to} $M$ \textbf{do}\\
        \>\>\> \textbf{for} $\ell_2 = 0$ \textbf{to} $M$ \textbf{do}\\
        \>\>\> \textbf{begin}\\
        \>\>\>\>  $I' = \bestbuy(I,\ell_1)$; \\
        \>\>\>\>  $I'' = \bestbuy(I',\ell_2)$; \\
        \>\>\>\>  \textbf{if} $p$ is an $\calR_{\vecalpha}$-winner in $I''$ 
                               \textbf{and} $\ell_1+\ell_2 < b$ \textbf{then}\\
        \>\>\>\>\>  set $b = \ell_1+\ell_2$;\\
        \>\>\> \textbf{end}\\
        \>\> \textbf{return} $b$; \\
        \> \textbf{end}
      \end{tabbing}
      \caption{\label{fig:algA}Algorithm $\calA$.}
    \end{center}
  \end{figure}

    To this end, 
    we will show that there exist
    $\ell_1,\ell_2 \leq \sum_{i=1}^n\pi^i(m)$ such that 
    $I' = \bestbuy(I,\ell_1)$, $I'' = \bestbuy(I',\ell_2)$, 
    $p$ is an $\calR_{\vecalpha}$-winner in $I''$, 
    and $\ell_1+\ell_2\le 2\opt(I)$.

    Let $\vect = (t_1, \ldots, t_n)$ be an optimal shift-action that
    ensures $p$'s victory, that is, $\Pi(\vect) = \opt(I)$. 
    Set $k = \score_{\shift(C, V, \vect)}(p) - \score_E(p)$.

    Consider an instance $I'$ obtained from $I$ by spending the total cost
    of the optimal shift action greedily, 
    i.e., so as to maximize $p$'s score.
    Formally, let $\ell_1 = \Pi(\vect)$ and set $I' = \bestbuy(I,\ell_1)$. 
    Let $\vecs' = (s'_1, \ldots, s'_n)$ be the shift-action that transforms $I$
    into $I' = (C,V',\Pi',p)$, and set $E' = (C,V')$. 
    By construction, we have $\score_{E'}(p) \ge \score_E(p)+k$.

    Let $\vecr=(r_1, \ldots, r_n)$ be the common part of shift-actions 
    $\vect$ and $\vecs'$, i.e.,
    set $r_i = \min\{t_i,s'_i\}$ for $i=1, \dots, n$.  
    Let $I^r = \shift(C, V,\vecr)$,
    where $I^r = (C,V^r,\Pi^r,p)$, and set $E^r = (C,V^r)$. 
 
    Finally, set $\ell_2 = \Pi'(\vect-\vecr)$, $I'' =
    \bestbuy(I',\ell_2)$, and let $\vecs'' = (s''_1, \ldots, s''_n)$
    be the shift-action that transforms $I'$ into $I'' = (C,V'',
    \Pi'',p)$. Let $E'' = (C,V'')$.  Observe that for each $i=1,
    \dots, n$ we have either $t_i-r_i=0$, in which case $\pi'^i(t_i-r_i)=0$, 
    or $t_i-r_i=t_i-s'_i$, in which case $\pi'^i(t_i-r_i) =
    \pi'^i(t_i-s'_i)=\pi^i(t_i)-\pi^i(s'_i)$.  Therefore, we have
    $\Pi'(\vect-\vecr)\le \Pi(\vect)$.  Now, the total cost of
    $\vecs'+\vecs''$ is given by $\ell_1+\ell_2 =
    \Pi(\vect)+\Pi'(\vect-\vecr)\le 2\Pi(\vect)$.  As
    $\Pi(\vect)=\opt(I)$, we obtain $\ell_1+\ell_2\le 2\opt(I)$.  It
    remains to show that $p$ is a winner in $\shift(C, V,
    \vecs'+\vecs'')$.

    Set $k^r= \score_{E^r}(p) - \score_E(p)$.
    The shift-actions $\vect-\vecr$ and $\vecs'-\vecr$ satisfy
    \begin{eqnarray}
      \label{eq:scring2-1}
      \score_{\shift(C, V^r,\vect-\vecr)}(p)    &=& \score_{E^r}(p)+(k-k^r), \\
      \label{eq:scring2-2}
      \score_{\shift(C, V^r,\vecs'-\vecr)}(p) &\geq& \score_{E^r}(p)+(k-k^r).
    \end{eqnarray}
    We have $\shift(C, V^r,\vect-\vecr)=\shift(C, V, \vect)$, so
    $p$ is an $\calR_\vecalpha$-winner in $\shift(C, V^r,\vect-\vecr)$.
    Thus, by Lemma~\ref{lem:any2}, 
    any shift-action that increases the score of $p$ in $E^r$ by
    $2(k-k^r)$ points ensures that $p$ is a winner in the resulting
    election. We will now show that this holds 
    for the shift-action $\vecs''+(\vecs'-\vecr)$, 
    and hence $p$ is a winner in
    $\shift(C, V^r,\vecs''+\vecs'-\vecr)=\shift(C, V, \vecs''+\vecs')$.

    For each $i=1, \dots, n$, if $t_i-r_i \neq 0$, then $r_i=s'_i$
    and the $i$'th voter ranks $p$ in the same position both in $V'$ and in $V^r$.
    Thus, $\Pi^r(\vect-\vecr) = \Pi'(\vect-\vecr) = \ell_2$, and applying 
    $\vect-\vecr$ to $I'$ increases $p$'s score by the same amount as 
    applying $\vect-\vecr$ to $I^r$.
    By equation~\eqref{eq:scring2-1}, this implies
    \begin{equation}
      \label{eq:scring2-3}
      \score_{\shift(C, V',\vect-\vecr)}(p) = \score_{E'}(p) + (k-k^r).
    \end{equation}

    By definition, $\vecs''$ is a shift-action of cost at most $\ell_2 =
    \Pi'(\vect-\vecr)$ that applied to $E'$ increases $p$'s score
    as much as possible. Thus, equation~\eqref{eq:scring2-3} implies
    \begin{equation}
      \label{eq:scring2-4}
      \score_{E''}(p) \geq \score_{E'}(p) + (k-k^r).
    \end{equation}
    Since $E'' = \shift(E',\vecs'')$ and $E' = \shift(E^r,\vecs'-\vecr)$, by
    combining equations~\eqref{eq:scring2-2} and~\eqref{eq:scring2-4}
    we obtain the following inequality:
    \begin{eqnarray*}
      \score_{E''}(p) &\geq& \score_{E'}(p) + (k-k^r) \\
      & = & \score_{\shift(C, V^r,\vecs'-\vecr)}(p) + (k-k^r) \\
      & \geq & \score_{E^r}(p)+2(k-k^r).
    \end{eqnarray*}
    Thus, $p$ is a winner in election $E''$. 
This completes the proof of Proposition~\ref{prop:pseudo}.
\end{proof}

We will now convert algorithm $\calA$
into a $(2+\eps)$-approximation scheme.
A natural approach is to scale the bribery price functions 
given by $\Pi$ by a sufficiently large parameter $K$. However, the 
appropriate choice of $K$ depends on the actual value of the optimal
solution. Therefore, we make polynomially many guesses
of the value of $K$.

\begin{proposition}\label{prop:approx}
  There exist an algorithm $\calA'$ that given a rational $\eps>0$,
  a scoring rule $\vecalpha = (\alpha_1, \ldots, \alpha_m)$ and an
  instance $I = (C,V,\Pi,p)$ of $\calR_\vecalpha$-\textsc{shift-bribery}
  with $|C| = m$,  
  runs in time  $\poly(|I|,|\vecalpha|,\frac{1}{\eps})$
  and outputs a successful shift-action $\vect$ for $I$
  that satisfies $\Pi(\vect) \leq (2+\eps)\opt(I)$.
\end{proposition}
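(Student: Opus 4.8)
The plan is to use knapsack-style rounding of the cost functions, feeding the rounded instance to the pseudopolynomial algorithm $\calA$ of Proposition~\ref{prop:pseudo}. Since the running time of $\calA$ on an instance is polynomial in the instance size, $|\vecalpha|$, and the total cost $\sum_{i}\pi^i(m)$, to make it run in time $\poly(|I|,|\vecalpha|,\frac1\eps)$ we must scale the costs down so that the rounded instance has total cost $\poly(n,\frac1\eps)$, while keeping the accumulated rounding error below $\eps\cdot\opt(I)$. The right scaling factor depends on the unknown value $\opt(I)$, so I would try all powers of two $g\in\{1,2,4,\dots,2^{\lceil\log M_0\rceil}\}$ as guesses, where $M_0=\sum_{i=1}^n\pi^i(m)\ge\opt(I)$ is the a priori upper bound; the case $\opt(I)=0$ (i.e.\ $p$ is already a winner) is detected separately. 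There are only $O(|I|)$ guesses, and the correct one, $g^\ast$ with $g^\ast\le\opt(I)<2g^\ast$, is among them.

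For a fixed guess $g$, first \emph{truncate}: for each voter $i$ let $k_i=\max\{k:\pi^i(k)\le 2g\}$ (well defined since $\pi^i(0)=0$), and set $\pi^i$ to $+\infty$ on $\{k_i+1,k_i+2,\dots\}$. Since an optimal shift-action $\vect^\ast$ for $I$ satisfies $\pi^i(t^\ast_i)\le\Pi(\vect^\ast)=\opt(I)<2g$, hence $t^\ast_i\le k_i$, this truncation leaves $\opt$ unchanged for the correctly guessed instance and bounds every finite cost by $2g$; here I would invoke the paper's remark that $\calA$ extends to cost functions that may take the value $+\infty$ (with the internal budget cap taken to be $\sum_i\max\{\hat\pi^i(k):\hat\pi^i(k)<\infty\}$), and it is essential to use $+\infty$ rather than a finite cap so that $\calA$ cannot exploit an artificially cheap large shift. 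Next \emph{scale}: set $K=\max\{1,\lfloor\eps g/(2n)\rfloor\}$ and let $\hat\pi^i(k)=\lceil\pi^i(k)/K\rceil$ for $k\le k_i$ and $+\infty$ otherwise; call the resulting instance $\hat I$. Every finite value of $\hat\pi^i$ is at most $\lceil 2g/K\rceil$, and since $\lfloor x\rfloor\ge x/2$ for $x\ge1$ one checks $g/K=O(n/\eps)$ when $K\ge1$, while $g<2n/\eps$ directly when $K=1$; either way $\sum_i\max\{\hat\pi^i(k):\hat\pi^i(k)<\infty\}=O(n^2/\eps)$, so $\calA(\hat I)$ runs in time $\poly(|I|,|\vecalpha|,\frac1\eps)$. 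Run $\calA$ on $\hat I$ to get a shift-action $\hat\vect$; as $\hat I$ has the same candidates and votes as $I$, $\hat\vect$ is a successful shift-action for $I$ too. Output the $\hat\vect$ of minimum \emph{original} cost $\Pi(\hat\vect)$ over all guesses.

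For the analysis, fix the correct guess $g=g^\ast$. Because $t^\ast_i\le k_i$, the action $\vect^\ast$ is feasible for $\hat I$ with $\hat\Pi(\vect^\ast)=\sum_i\lceil\pi^i(t^\ast_i)/K\rceil\le\opt(I)/K+n$, so $\opt(\hat I)\le\opt(I)/K+n$. Then $\calA$ returns $\hat\vect$ with $\hat\Pi(\hat\vect)\le2\opt(\hat I)\le2\opt(I)/K+2n$; in particular this is finite, so $\hat t_i\le k_i$ for all $i$ and hence $\pi^i(\hat t_i)\le K\hat\pi^i(\hat t_i)$, giving $\Pi(\hat\vect)\le K\hat\Pi(\hat\vect)\le2\opt(I)+2nK$. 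If $K\ge1$ then $K\le\eps g^\ast/(2n)\le\eps\opt(I)/(2n)$, so $2nK\le\eps\opt(I)$ and $\Pi(\hat\vect)\le(2+\eps)\opt(I)$; if $K=1$ then $\hat I$ is just $I$ truncated at the $k_i$'s, so $\opt(\hat I)=\opt(I)$ and $\Pi(\hat\vect)=\hat\Pi(\hat\vect)\le2\opt(I)$. In either case the algorithm's output is successful and has cost at most $\Pi(\hat\vect)\le(2+\eps)\opt(I)$, as required; running time is $O(|I|)$ times the cost of one call to $\calA$ on an instance of size $\poly(|I|,\frac1\eps)$.

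The main obstacle is exactly the tension between running time and approximation ratio: the cost functions may be exponentially large even when $\opt(I)$ is small, so $\calA$ cannot be run directly, yet after scaling we must still be able to pull $\calA$'s output back to a cheap action in the original instance. This is what forces both the truncation step (together with the use of $+\infty$ costs) and the factor-two guessing of $\opt(I)$ needed to calibrate $K$; once these are in place the estimates are routine.
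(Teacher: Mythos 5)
Your proposal is correct and follows essentially the same route as the paper: round the cost functions by a scale factor $K$ calibrated to a geometrically guessed parameter, neutralize shifts that are too expensive for the current guess, run the pseudopolynomial algorithm $\calA$ of Proposition~\ref{prop:pseudo} on the rounded instance, and pull the solution back to the original costs, absorbing the total rounding error $nK$ into $\eps\cdot\opt(I)$. The only differences are cosmetic: you guess $\opt(I)$ within a factor of two and truncate over-budget shifts to $+\infty$ (invoking the paper's remark that infinite costs are allowed), whereas the paper guesses the cost of the most expensive single shift in an optimal solution and replaces over-budget costs by a large finite sentinel value combined with a discard test, obtaining a $(2+4\eps)$ bound that is then rescaled to $(2+\eps)$.
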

\begin{proof}
  Let $R = \max_{i\in \{1, \ldots, n\}}\pi^i(m)$.
  Algorithm $\calA'$ executes $\lceil \log R \rceil$ iterations, 
  each with a different value $\rho$, the current
  estimate of the cost of the most expensive shift. We start with
  $\rho = 1$ and we double it after every iteration. In each iteration
  we compute some solution---using algorithm $\calA$---and then either
  discard it or keep it. After the last iteration, we pick a solution
  with the lowest cost. Let us now describe a single iteration.

  Let $\rho$ be the current guess of the cost of the most expensive
  shift. Let $K = \frac{\rho\eps}{n}$.  We construct an
  instance $J = (C,V,\Lambda,p)$, where $\Lambda = (\lambda_1,
  \ldots, \lambda_n)$ is a sequence of shift-bribery cost functions
  defined as follows. For each $i=1, \dots, n$, 
  $j=0, \dots, m$, we set
  \[
  \lambda^i(j) = \left\{ \begin{array}{ll}
      \left\lceil \frac{\pi^i(j)}{K} \right\rceil & \mbox{ if $\pi^i(j) \leq \rho$,} \\
      2(\frac{n^2}{\varepsilon}+n)+1 & \mbox{ otherwise.}
    \end{array}\right.
  \]
  Note that all the values of functions in $(\lambda_1, \ldots,
  \lambda_n)$ are polynomially bounded in $n$ and
  $\frac{1}{\varepsilon}$. Algorithm $\calA$ solves $J$ in time
  $\poly(n,\frac{1}{\varepsilon})$, producing a
  shift-action $\vecs = (s_1, \ldots, s_n)$.  Note
  that since $I$ and $J$ differ only in cost functions, $\vecs$ is a
  valid solution for $I$.  If we have
  $\lambda^i(t_i) \leq 2(\frac{n^2}{\varepsilon}+n)$ (equivalently,
  $\pi^i(j) \leq \rho$) for all $i=1, \dots, n$, we store solution 
  $\vecs$ for further use. Otherwise we discard it. 
  Observe that in the last iteration we have $\rho \geq R$, so the solution
  obtained in that iteration is not discarded.
  After all iterations have
  been computed, we output the cheapest stored solution. 
  Clearly, algorithm $\calA'$ runs in polynomial time. It
  remains to show that it produces an accurate approximation.

  Let $\vect = (t_1, \ldots, t_n)$ be a shift-action that corresponds
  to an optimal solution, and let $\sigma$ be the cost of the most expensive shift within $\vect$, 
  i.e., set $\sigma = \max_{i=1, \ldots, n}\pi^i(t_i)$.
  Consider an
  iteration in which $\frac{\rho}{2} \leq \sigma \leq \rho$.  Let $J =
  (C,V,\Lambda,p)$ be the scaled-down instance used in this iteration.
  Note that $\Lambda(\vect) \leq n\left\lceil\frac{\sigma}{K}\right\rceil
  \leq n\left\lceil\frac{\rho}{K}\right\rceil = n \left\lceil
    \frac{n}{\varepsilon} \right\rceil \leq
  \frac{n^2}{\varepsilon}+n$.  Thus, algorithm $\calA$ finds a
  solution for $J$ with cost (in terms of $\Lambda$) at most
  $2(\frac{n^2}{\varepsilon}+n)$, and so this solution certainly is
  not discarded.

  Let $\vecs = (s_1, \ldots, s_n)$ be the solution produced for $J$ by
  $\calA$. As argued before, $\vecs$ is also a solution for $I$.  We have
   $\Pi(\vect) \leq \Pi(\vecs) \leq K\Lambda(\vecs) \leq 2K\Lambda(\vect)$.
  Indeed, the first inequality holds because $\vect$ is an optimal solution for $I$, the
  second holds due to definition of $\Lambda$, and the third one holds
  because $\vecs$ is a $2$-approximate solution for $J$.  Due to
  rounding, for each $i=1, \dots, n$ we have
  \[
  \pi^i(t_i) \leq K\lambda^i(t_i) = K\left\lceil
    \frac{\pi^i(t_i)}{K}\right\rceil \leq \pi^i(t_i)+K.
  \] 
  Thus, we obtain $K\Lambda(t) \leq \Pi(\vect)+ Kn$.  
  Using the fact that $nK =
  \rho\varepsilon$, we get $ \Pi(\vecs) \leq 2K\Lambda(\vect) \leq 2(\Pi(\vect)+
  Kn) \leq 2\Pi(\vect) + 2\rho\epsilon \leq (2+4\varepsilon)\Pi(\vect)$. 
  The last inequality follows from the fact that $\frac{\rho}{2} \leq
  \Pi(\vect)$. This completes the proof of Propositon~\ref{prop:approx}.
\end{proof}

To complete the proof of Theorem~\ref{thm:scoring2}, 
we will now convert our $(2+\varepsilon)$-approximation scheme into 
a 2-approximation algorithm using a bootstrapping argument.

\begin{proof}[Proof of Theorem~\ref{thm:scoring2}]
Let $I = (C,V,\Pi,p)$ be an instance
of {\sc $\calR_\vecalpha$-shift-bribery}, and
let $\vect = (t_1, \ldots, t_n)$ be an optimal shift-action for $I$.
By the pigeonhole principle, 
for some $i\in\{1, \dots, n\}$ we have
$
  \pi^i(t_i) \geq \frac{1}{n}\Pi(\vect).
$
Assume for now that we know $i$ and $t_i$ (subsequently, we will
show how to get rid of this assumption).

Let $\vecd = (0^{i-1},t_i,0^{m-i})$, and set $I' = \shift(I,\vecd)$.
Clearly, we have $\opt(I') = \opt(I) - \pi^i(t_i)$. 
Let $\eps=\frac{1}{n}$, 
and let $\vecs = (s_1, \ldots, s_n)$ be the shift action produced
by the algorithm $\calA'$ on $(I', \eps)$.
Clearly, $p$ is a winner in $\shift(I, \vecs+\vecd)$.
Further, by Proposition~\ref{prop:approx}, 
we have $\Pi(\vecs) \leq (2+\eps)(\Pi(\vect)-\pi^i(t_i))$.
Therefore, the cost of the shift-action
$\vecs+\vecd$ can be estimated as follows:
\begin{eqnarray*}
  \Pi(\vecs+\vecd) &=& \Pi(\vecs) + \pi^i(t_i)  \\
            &\leq& (2+\eps)(\Pi(\vect)-\pi^i(t_i)) + \pi^i(t_i) \\
            &\leq& 2\Pi(\vect) -\pi^i(t_i) + \eps\Pi(\vect)\\
            &\leq& 2\Pi(\vect) +(\eps\Pi(\vect) - \frac{1}{n}\Pi(\vect)) = 2\Pi(\vect), 
\end{eqnarray*}
where we use the fact that $\pi^i(t_i) \geq \frac{1}{n}\Pi(\vect)$.
Thus, $\vecs+\vecd$ is a $2$-approximatite solution.%

While we do not know the values of $i$ and $t_i$. there
are only $n$ possibilities for the former and 
$m$ possibilities for the latter.
Thus, our algorithm $\calB$ will try all of them,
and return the best solution.
Since $\frac{1}{\varepsilon} = n$,
the running time of $\calB$ is polynomial in $|I|$.
\end{proof}

\noindent
By using the algorithm $\calB$ with $\vecalpha=(m-1, \dots, 1, 0)$, 
we obtain a 2-approximation algorithm for the Borda rule.
This algorithm is different from the one given 
in~\cite{elk-fal-sli:c:swap-bribery}, even though they have 
the same approximation guarantee. Indeed, the algorithm
of~\cite{elk-fal-sli:c:swap-bribery} relies on a different
dynamic programming subroutine, whose running time is polynomial
in the instance size and $\sum_{i=1}^m\alpha_i$ 
(rather than the instance size and $\sum_{i=1}^n\pi^i(m)$, 
as in our construction). Since for Borda the expression
$\sum_{i=1}^m\alpha_i$ is polynomial in the size of the 
instance, this immediately produces a polynomial-time algorithm.
It is not hard
to see that the algorithm proposed in~\cite{elk-fal-sli:c:swap-bribery}
can be adapted to work for any scoring rule with $\sum_{i=1}^m\alpha_i=\poly(m)$. 
Of course, such an algorithm would be considerably faster
than the three-step procedure of Theorem~\ref{thm:scoring2}.
Indeed, one may wonder if the more complicated
algorithm described above is useful at all, since the scoring
vectors used in practice often have small coordinates.
However, an important feature of our algorithm is that 
it works even if each voter $v^i$ uses his own scoring vector
$\vecalpha^i$. This means that we can adapt it for {\em weighted} 
voters, by replacing a voter of weight $w$ with a scoring vector
$(\alpha_1, \dots, \alpha_m)$ by a unit-weight voter
with a scoring vector $(w\alpha_1, \dots, w\alpha_m)$. 

\begin{corollary}\label{cor:weights}
There is an algorithm $\calB^w$ that
given a scoring rule $\vecalpha$ and an
instance $I = (C,V,\Pi,p, \vecw)$ 
of \textsc{weighted} $\calR_\vecalpha$-\textsc{shift-bribery},  
outputs a successful shift-action $\vect$ for $I$
that satisfies $\Pi(\vect) \leq 2\opt(I)$, and runs
in time $\poly(|I|,|\vecalpha|)$.
\end{corollary}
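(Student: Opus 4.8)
The plan is to first observe that the entire development of Section~\ref{sec:scoring} --- Lemma~\ref{lem:any2}, Proposition~\ref{prop:pseudo}, Proposition~\ref{prop:approx}, and the bootstrapping step that yields Theorem~\ref{thm:scoring2} --- goes through unchanged if, instead of a single scoring vector $\vecalpha$ shared by all voters, each voter $v^i$ carries his own scoring vector $\vecalpha^i = (\alpha^i_1, \dots, \alpha^i_m)$ with $\alpha^i_1 \ge \dots \ge \alpha^i_m \ge 0$. Indeed, the only property of the scoring rule used in the proof of Lemma~\ref{lem:any2} is that moving $p$ from position $i{+}1$ to position $i$ in vote $\pref^j$ gives $p$ exactly $\alpha^j_i - \alpha^j_{i+1}$ extra points while subtracting the same amount from whoever occupied position $i$; since this quantity is nonnegative for each voter separately, the bound $\max_{c\in C}[\score(c)-\score(p)]\le 2k$ and hence the conclusion of Lemma~\ref{lem:any2} are unaffected. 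The dynamic program of Lemma~\ref{lem:bestbuy} only needs the per-voter gain values $\gain^i(k)$, which are computable from $\vecalpha^i$, and the scaling/guessing arguments of Proposition~\ref{prop:approx} and of the bootstrapping proof touch only the cost functions, not the scoring vectors. Consequently there is an algorithm $\calB'$ that, given a family $(\vecalpha^1, \dots, \vecalpha^n)$ and an instance $I=(C,V,\Pi,p)$ of the corresponding (non-uniform) shift-bribery problem, returns a successful shift-action $\vect$ with $\Pi(\vect)\le 2\opt(I)$ in time $\poly(|I|, \sum_{i=1}^n |\vecalpha^i|)$.

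Second, I would reduce \textsc{weighted} $\calR_\vecalpha$-\textsc{shift-bribery} to this non-uniform variant. Given an instance $I=(C,V,\Pi,p,\vecw)$, define $\vecalpha^i = (w_i\alpha_1, \dots, w_i\alpha_m)$ for each $i$; each such vector is still nonincreasing and nonnegative. A voter of weight $w_i$ with preference order $\pref^i$ contributes $w_i\alpha_j$ points to the candidate he ranks in position $j$, which is precisely the contribution of a unit-weight voter with scoring vector $\vecalpha^i$ and the same preference order. Hence for every election obtainable by a shift-action, the $\calR_\vecalpha$-scores in the weighted instance coincide with the scores in the transformed non-uniform instance, so the two instances have the same successful shift-actions; and since the cost functions $\Pi$ are left untouched, they have the same optimum $\opt(I)$. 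Running $\calB'$ on the transformed instance therefore yields a shift-action $\vect$ that is successful for $I$ and satisfies $\Pi(\vect)\le 2\opt(I)$; this is our algorithm $\calB^w$.

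Finally, I would check the running time. Each entry $w_i\alpha_j$ has at most $\log_2(w_i) + \log_2(\alpha_j) + O(1)$ bits, so $|\vecalpha^i| \le |\vecalpha| + m\bigl(\log_2 w_i + O(1)\bigr)$, which is polynomial in $|I|$ and $|\vecalpha|$ since $m$, $\log_2 w_i$, and $|\vecalpha|$ are all bounded by $|I|$ plus $|\vecalpha|$; summing over $i$ keeps this polynomial. Thus $\calB'$ --- and therefore $\calB^w$ --- runs in time $\poly(|I|,|\vecalpha|)$, as required. The only mildly delicate point in the whole argument is the first step: one must inspect the proofs of Section~\ref{sec:scoring} and confirm that no step secretly relies on all voters sharing a common scoring vector. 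The place to be careful is the ``gap is at most $2k$'' computation inside Lemma~\ref{lem:any2}, which is precisely why per-voter monotonicity is the relevant hypothesis there; everything else is routine bookkeeping.
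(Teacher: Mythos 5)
Your proposal is correct and follows exactly the paper's argument: the paper justifies the corollary by noting that the Section~\ref{sec:scoring} algorithm works with per-voter scoring vectors and then replacing each weight-$w_i$ voter by a unit-weight voter with scoring vector $(w_i\alpha_1,\dots,w_i\alpha_m)$, which is precisely your reduction. Your write-up merely makes explicit the details (the per-voter version of Lemma~\ref{lem:any2} and the bit-size accounting) that the paper leaves implicit.
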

Note that there does not seem to be an easy way to derive 
Corollary~\ref{cor:weights} from the result  
of~\cite{elk-fal-sli:c:swap-bribery}. Indeed, Corollary~\ref{cor:weights}
is quite surprising, as many problems in computational
social choice are known to be hard for weighted voters---in fact, 
\textsc{weighted shift-bribery} can be shown 
to be $\np$-complete for any nontrivial family of scoring rules.
On the other hand, an efficient algorithm for the weighted
case is very useful, as large voter weights are ubiquitous
in campaign management scenarios, where a ``voter''
corresponds to a collection of individuals that can be ``bribed''
by the same promotional activity, or in our sponsored search 
example, where the search terms may differ in popularity. 

One may also wonder if the algorithm $\calA$ can be simplified
by using a single {\bf for}-loop, which for each value of $\ell$
finds the best shift-action of cost $\ell$.
In Appendix~\ref{app:single-loop}, we explore this question
in more detail, showing that the resulting algorithm $\calG$
is never better than $\calA$, and can sometimes
produce a shift-action that is almost twice as expensive
as the one produced by $\calA$. However, we do not
know if $\calG$ is nevertheless a 2-approximation algorithm
for our problem. 

Finally, we remark that unless $\p=\np$, there is no FPTAS for shift
bribery under scoring rules (a direct consequence of the
Borda-\textsc{shift-bribery}
$\np$-hardness proof from~\cite{elk-fal-sli:c:swap-bribery}).
However, we cannot rule
out the possibility that there exist approximation
algorithms for Borda-\textsc{shift-bribery} whose
approximation ratio is less than two.

\section{Copeland and Maximin}
Paper~\cite{elk-fal-sli:c:swap-bribery} shows that the decision
version of the {\sc shift-bribery} problem is $\np$-hard for
Copeland$^\alpha$ for any rational $\alpha\in[0, 1]$
as well as for maximin.
We will now give $m$-approximation algorithms
for {\sc shift-bribery} under Copeland$^\alpha$ with rational 
$\alpha \in [0,1]$ 
and maximin. We then show how to improve
the approximation ratio for maximin to $O(\log m)$.
Finally, we argue that the $O(\log m)$ approximation ratio
for maximin is asymptotically tight, 
and show how our results can be adapted to the weighted setting.

Under Copeland and maximin, the winner is selected
on the basis of the outcomes of pairwise comparisons between the candidates.
Thus, these rules can be defined in the so-called
{\em irrational voter model}. In this model, the preferences
of each voter $i$ are given by her {\em preference table}, which is
an antisymmetric $m\times m$ matrix whose entry $(j, k)$ is $1$ 
if $i$ prefers $c_j$ to $c_k$ and $-1$ otherwise.
Clearly, a preference order can be converted into a preference table, 
but the converse is not true, as preference tables
can encode cyclic preferences.

Faliszewski et al.~\cite{fal-hem-hem-rot:j:llull} adapt the notion of
bribery to the irrational voter model; the resulting notion is known
as {\em microbribery} (see next paragraph for a formal
definition). Moreover, they show that Copeland$^\alpha$-{\sc
  microbribery} is in $\p$ for $\alpha\in\{0, 1\}$. We will first give
a poly-time algorithm for a special case of Copeland$^\alpha$-{\sc
  microbribery} for all $\alpha \in [0,1]\cap\Q$, and---based on this
result---derive an $m$-approximation algorithm for
Copeland$^\alpha$-\textsc{shift-bribery}. Then, we will argue that our
technique can be used to convert exact algorithms for {\sc
  microbribery} into $m$-approximation algorithms for {\sc shift
  bribery} for many other voting rules, including maximin.  Finally,
we will show an $O(\log m)$-approximation algorithm for {\sc
  shift-bribery} in maximin and argue it is asymptotically optimal.

First of all, we need to describe microbribery~\cite{fal-hem-hem-rot:j:llull} precisely.
Let $\calR$ be a voting rule defined in the irrational voter model, 
i.e., a mapping that for any collection of preference tables over a given
set of candidates outputs a subset of the candidates.  
In the
$\calR$-\textsc{microbribery} problem we are given an election $E = (C,V)$,
where $C = \{p,c_1, \ldots, c_{m-1}\}$ and $V = (v^1, \ldots, v^n)$ is a
collection of voters specified by their preference tables.  Also, for each 
$i=1, \dots, n$, we have a price function $\delta^i  
\colon C^2 \rightarrow \Z^+\cup\{+\infty\}$, where  
$\delta^i(c_j, c_k)$ is the cost
of flipping the $(j, k)$-th entry in the preference table of $v^i$
(i.e., if prior to the bribery we have $c_j \succ^i c_k$, 
 then after the flip we obtain $c_k \succ^i  c_j$, and vice versa).
We require the price functions to be symmetric, 
i.e., for each $i=1, \dots, n$ and all $c_j, c_k\in C$
we require $\delta^i(c_j, c_k) = \delta^i(c_k,c_j)$.
Further, we require $\delta^i(c_j, c_j)=0$ for all $i=1, \dots, n$ and all $c_j\in C$.  
The goal is to compute a set of flips $\calS=(S_1, \dots, S_n)$
in the voters' preference tables 
that makes $p$ an $\calR$-winner of the election at minimum cost.  
We identify each flip with the pair of candidates being
flipped, so we have $S_i\subseteq C\times C$ for all $i=1, \dots, n$.
We denote an instance $M$ of $\calR$-\textsc{microbribery} by  
$(C,V,\Delta,p)$, where $\Delta$ is the sequence $(\delta^1, \ldots,
\delta^n)$ of microbribery price functions. Also, we 
let $\Delta(\calS)$ denote the total cost of flips in $\calS$, 
i.e., we set 
$$
\Delta(\calS)=\sum_{i=1}^n\sum_{(c_j, c_k)\in S_i}\delta^i(c_j, c_k).
$$ 

Faliszewski et al.~\cite{fal-hem-hem-rot:j:llull} have shown that   
Copeland$^\alpha$-\textsc{microbribery} is in $\p$ for $\alpha \in
\{0,1\}$. However, it is not clear if their algorithm can be extended
for all $\alpha\in[0, 1]\cap\Q$. Therefore, instead of using the
result of~\cite{fal-hem-hem-rot:j:llull} directly, 
we will now prove that Copeland$^\alpha$-\textsc{microbribery}
is in $\p$ for all $\alpha\in[0, 1]\cap\Q$ as long as the only permissible
flips are the ones that involve the manipulator's preferred
candidate $p$. We will then show that for all $\alpha\in[0, 1]\cap\Q$, 
finding an $m$-approximate solution for Copeland$^\alpha$-\textsc{shift-bribery} 
can be reduced to solving an instance of Copeland$^\alpha$-\textsc{microbribery}
that satisfies this constraint.

\begin{lemma}\label{lem:allalpha}
  For any $\alpha\in [0,1]\cap\Q$, 
  {\em Copeland$^\alpha$-}\textsc{microbribery} is in $\p$ 
  for all instances $M=(C, V, \Delta, p)$ with $|C|=m$, $|V|=n$ 
  such that for any $i=1, \dots, n$ and any two
  $c_j, c_k \in C$, it holds that if $p \not\in \{c_j,c_k\}$
  then $\delta^i(c_j, c_k)=+\infty$.
\end{lemma}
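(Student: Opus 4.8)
The plan is to reduce the restricted Copeland$^\alpha$-\textsc{microbribery} problem to a minimum-weight perfect matching problem, which is solvable in polynomial time. The key observation is that since the only permissible flips involve $p$, a bribery of voter $v^i$ consists of deciding, for each $c_j \ne p$, whether or not to flip the $(p,c_j)$ entry; the pairwise comparisons between candidates other than $p$ are frozen. Consequently, for each $c_j \ne p$, the quantity $N_E(c_j, c_k)$ for $c_k \ne p$ never changes, so the Copeland scores of all candidates other than $p$ depend only on how many of the pairwise elections \emph{against $p$} each such candidate wins/ties/loses. The briber's freedom is exactly: for each voter and each $c_j$, at cost $\delta^i(p,c_j)$, toggle whether $v^i$ prefers $p$ to $c_j$.

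First I would observe that we may guess, in polynomially many ways, the \emph{target outcome}: the final Copeland score that $p$ achieves (equivalently, for how many candidates $p$ wins the pairwise election and for how many it ties). Since Copeland$^\alpha$ is monotone in the obvious sense and the scores of $p$'s opponents only decrease when $p$ does better against them, it suffices to decide, for a fixed target, whether $p$ can be made a winner at cost $\le b$; then binary-search on $b$, or simply minimize directly. Actually a cleaner route: guess the final score $s^*$ of $p$ and a threshold $\tau$ such that $p$ wins iff every other candidate ends with Copeland score $\le s^*$ (respecting the nonunique-winner model, so ties with $p$ are allowed); this guess costs only $\poly(m)$ choices.

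Next, for a fixed target I would set up the following per-candidate subproblem. For each opponent $c_j$, the current margin $N_E(p,c_j) - N_E(c_j,p)$ is known, and flipping the $(p,c_j)$ entry of voter $v^i$ changes this margin by $\pm 2$ (depending on the current orientation) at cost $\delta^i(p,c_j)$. So for each $c_j$ we have an independent "budgeted margin-adjustment" problem: at minimum cost, push the $p$-vs-$c_j$ margin into one of three regimes — $p$ wins, $p$ ties, $p$ loses — where the choice of regime is constrained by the global target $s^*$ (the number of $c_j$'s in the "win" and "tie" regimes is fixed by the guess) and where putting $c_j$ in a given regime may \emph{also} be forced if $c_j$'s resulting Copeland score would otherwise exceed $s^*$. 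Within a single $c_j$, the cheapest way to achieve a prescribed number of additional $p$-favorable flips is obtained by sorting the relevant $\delta^i(p,c_j)$ values and taking the cheapest ones — a greedy computation. This yields, for each $c_j$ and each admissible regime, a cost $\mathrm{cost}(c_j, \text{regime})$.

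The main obstacle — and the reason matching is natural — is the global coupling: the guess $s^*$ fixes how many opponents land in each regime, so we must choose an assignment of opponents to regimes minimizing the total cost subject to these cardinality constraints, and additionally respecting the per-opponent feasibility constraints (some $c_j$ may be forced out of the "loss" regime because $c_j$'s frozen wins against other candidates already give it too high a score). This is an assignment/transportation problem: bipartite with opponents on one side and "regime slots" on the other, with edge weights $\mathrm{cost}(c_j,\text{regime})$ and forbidden edges for infeasible pairs; it is solved in polynomial time by min-cost bipartite matching. Iterating over the $\poly(m)$ guesses of $s^*$ and the regime-cardinality vector, and returning the cheapest overall successful assignment, gives the algorithm; correctness follows because the construction exactly parametrizes all reachable final configurations and their minimum costs, and the running time is polynomial in $|M|$.
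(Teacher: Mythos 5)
Your proposal is correct and follows essentially the same route as the paper's proof: you guess $p$'s final configuration of pairwise wins and ties (equivalently the target score $i+\alpha j$), exploit the fact that only flips involving $p$ are allowed so that each opponent's cost for ending up in a given regime is computed independently by a greedy/sorting argument, and then combine these per-opponent costs under the cardinality and feasibility constraints. The only difference is the combination step -- you solve a min-cost assignment/transportation problem over regime slots, whereas the paper first greedily forces down the opponents whose scores exceed the target and then uses a knapsack-style dynamic program to bring $p$'s score exactly to the target -- and both are polynomial, so this is an implementation detail rather than a different argument.
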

\begin{proof}
  Fix $\alpha\in[0,1]\cap \Q$, and let $M = (C,V,\Delta,p)$
  be our input instance of Copeland$^\alpha$-\textsc{microbribery}.   
  Assume that $C = \{p, c_1, \ldots, c_{m-1}\}$, and let $E=(C, V)$.  

  Our algorithm works as follows.
  For all $i=0, \dots, m$ and all $j=0, \dots, m$, 
  we compute $k(i, j)=i+\alpha j$, and check whether $\score_E(p) \leq k(i, j) \leq m-1$.
  If this condition is satisfied, we 
  find a minimum-cost microbribery $\calS(i, j)$
  that ensures that $p$ has $k(i, j)$ points and every other
  candidate has at most $k(i, j)$ points; if no such microbribery
  exists, we declare the cost of $\calS(i, j)$ to be $+\infty$.
  In the end, we output the cheapest microbribery found. 
  Observe that setting $i=m-1$, $j=0$ results
  in a successful microbribery of finite cost in which $p$
  wins every pairwise election. 
  Thus, we are guaranteed to output a microbribery of finite cost.
  This approach is clearly correct and it remains to show how to compute  
  the microbribery $\calS(i, j)$ for each pair $(i, j)$.

  Let us fix a pair $(i, j)$ such that $\score_E(p) \leq k(i, j) \leq m-1$. 
  Since we can
  only flip entries of preference tables that involve $p$, for each
  candidate $c_k \in C$ we can decrease $c_k$'s score by either $1$ or
  $1-\alpha$ (if $c_k$ wins the pairwise election with $p$) or by
  $\alpha$ (if $c_k$ ties the pairwise election with $p$).  In each
  case it is easy to compute the cheapest way of achieving this.
  Thus, for each $c_k \in C$, if $\score_E(c_k) > k(i, j)$, we perform the
  cheapest microbribery that involves preference table entries for $p$
  and $c_k$ and brings $c_k$'s score down to $k(i, j)$ or below; clearly, 
  if this is impossible, then for this pair $(i, j)$ a microbribery 
  with the required properties does not exist. 
  After this step, each candidate $c_k \in C\setminus\{p\}$ has at most $k(i, j)$
  points and it remains to find the cheapest microbribery that ensures
  that $p$ has exactly $k(i, j)$ points. This can easily be done by using 
  standard dynamic programming techniques. 
\end{proof}

We will now use Lemma~\ref{lem:allalpha} to 
obtain an $m$-approximation algorithm for {\sc shift-bribery} 
under Copeland$^\alpha$, where $\alpha \in [0,1] \cap \Q$.

\begin{theorem}\label{thm:copeland}
  There exists a poly-time algorithm that given an instance $I =
  (C,V,\Pi,p)$ of {\em Copeland$^\alpha$-}\textsc{shift-bribery} with
  $\alpha \in [0,1]\cap \Q$ and $|C|=m$ outputs a shift-action $\vecs$
  such that $p$ is a winner in $\shift(C,V,\vecs)$ and $\Pi(\vecs)
  \leq m\cdot\opt(I)$.
\end{theorem}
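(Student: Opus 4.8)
The plan is to reduce the construction of an $m$-approximate \textsc{shift-bribery} solution to a single call of the $p$-only \textsc{microbribery} algorithm of Lemma~\ref{lem:allalpha}. The enabling observation is that, since Copeland$^\alpha$ depends only on the pairwise tallies $N_E(\cdot,\cdot)$, shifting $p$ upward by $k$ positions in a vote $\succ^i$ is equivalent, as far as the winner set is concerned, to flipping the $k$ preference-table entries that compare $p$ with the $k$ candidates immediately above $p$: both operations make $p$ beat exactly those $k$ candidates in $v^i$ where $p$ previously lost, and leave every other pairwise comparison untouched. In other words, shift bribery is the special case of $p$-only microbribery in which the set of candidates flipped against $p$ in each vote must consist of the candidates occupying the topmost positions above $p$.

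Concretely, given $I=(C,V,\Pi,p)$, I would build a microbribery instance $M=(C,V,\Delta,p)$ as follows. For voter $v^i$, let $t^i$ be $p$'s position in $\succ^i$ and let $e^i_1,\dots,e^i_{t^i-1}$ list the candidates above $p$, starting with the one directly above $p$. Put $\delta^i(p,e^i_\ell)=\delta^i(e^i_\ell,p)=\pi^i(\ell)$ for $\ell=1,\dots,t^i-1$ (the \emph{total} cost of shifting $p$ to just above $e^i_\ell$), put $\delta^i(p,p)=0$, and declare every remaining flip forbidden by setting its cost to $+\infty$; this includes all flips not involving $p$ and all flips of $p$ with a candidate ranked below $p$. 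Then $M$ meets the hypothesis of Lemma~\ref{lem:allalpha} and has size polynomial in $|I|$, so we can compute an optimal microbribery $\calS=(S_1,\dots,S_n)$ for $M$ in polynomial time. Since no forbidden flip is used, each $S_i$ is a set of candidates ranked above $p$, and I would output the shift-action $\vecs=(s_1,\dots,s_n)$ with $s_i=\max\{\ell:e^i_\ell\in S_i\}$, taking $s_i=0$ if $S_i=\emptyset$.

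Two things then need to be checked. For success: in each vote, shifting $p$ up by $s_i$ flips $p$ with the set $\{e^i_1,\dots,e^i_{s_i}\}$, which contains $S_i$; hence, compared with the election produced by $\calS$ (in which $p$ already wins), every pairwise comparison involving $p$ moves in $p$'s favour and no other comparison changes, so $p$'s Copeland$^\alpha$ score cannot decrease while no other candidate's can increase, and $p$ is still a winner of $\shift(C,V,\vecs)$. For the approximation ratio: on one side $\pi^i(s_i)=\delta^i(p,e^i_{s_i})\le\sum_{(c_j,c_k)\in S_i}\delta^i(c_j,c_k)$ because all $\delta^i$-values are nonnegative and $(p,e^i_{s_i})\in S_i$, so $\Pi(\vecs)\le\Delta(\calS)$; on the other side, if $\vect=(t_1,\dots,t_n)$ is an optimal shift-action, then flipping $p$ in each vote $i$ against the (at most $m-1$) candidates it passes under $\vect$ is a microbribery whose resulting election coincides with $\shift(C,V,\vect)$ and which therefore makes $p$ a winner, so $\Delta(\calS)\le(m-1)\sum_{i=1}^n\pi^i(t_i)=(m-1)\opt(I)$, using that there are at most $m-1$ flips per vote, that each costs at most $\pi^i(t_i)$ since $\pi^i$ is nondecreasing, and the boundary convention $\pi^i(s)=\pi^i(t^i-1)$ for $s\ge t^i$. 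Chaining the two bounds gives $\Pi(\vecs)\le\Delta(\calS)\le(m-1)\opt(I)\le m\cdot\opt(I)$.

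The delicate design choice — and the source of the factor $m$ — is to charge a single flip $(p,e^i_\ell)$ the \emph{full} shift cost $\pi^i(\ell)$ rather than the incremental cost $\pi^i(\ell)-\pi^i(\ell-1)$. Full costs are precisely what makes reading $\vecs$ off $\calS$ essentially free, since the farthest flipped candidate already pays for shifting past everything before it; the price is that bounding $\Delta(\calS)$ by the cost of the microbribery mimicking $\vect$ overcounts by up to a factor of $t_i\le m-1$ per vote. I expect this cost-assignment, together with verifying that a shift is genuinely interchangeable with the corresponding block of $p$-flips in the pairwise-comparison (``irrational voter'') model, to be the only real content; the rest is bookkeeping.
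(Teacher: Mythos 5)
Your proposal is correct and follows essentially the same route as the paper's proof: the same microbribery instance with each flip $(p,e^i_\ell)$ charged the full shift cost $\pi^i(\ell)$ and all other flips forbidden, the same read-off $s_i=\max\{\ell : e^i_\ell\in S_i\}$, and the same pair of inequalities $\Pi(\vecs)\le\Delta(\calS)\le\Delta(\calT)\le m\cdot\opt(I)$. The only differences are cosmetic: you make explicit the monotonicity argument for why the derived shift-action keeps $p$ a winner (which the paper leaves as ``easy to see'') and you note the slightly sharper $(m-1)$ bound.
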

\begin{proof}
Fix an $\alpha \in [0,1]\cap\Q$ and an instance $I = (C,V,\Pi,p)$ of
Copeland$^\alpha$-\textsc{shift-bribery} with $C=\{p, c_1, \dots, c_{m-1}\}$, 
$V=(v^1, \dots, v^n)$. 
For each $i=1, \dots, n$, 
assume that the preference order of a voter $v^i$ is given by
$$
c_{j_{i,k(i)}} \succ c_{j_{i,k(i)-1}} \succ \cdots \succ c_{j_{i,1}} \succ p \succ \cdots.
$$

Our algorithm first converts $I$ into an instance $M = (C,\hat{V},\Delta,p)$ of
Copeland$^\alpha$-\textsc{microbribery}. The list $\hat{V}$
contains $n$ voters. The preference table of each voter $\hat{v}^i$
is constructed from the preference ordering of $v^i$: we set
the $(j, k)$-th entry of $\hat{v}^i$'s preference table to 1
if and only if $c_j \pref^i c_k$.

The price functions $\Delta = (\delta^1, \ldots, \delta^n)$ 
are defined as follows.  
For each $i=1, \dots, n$ and all $\ell=1, \dots, k(i)$, we set
$\delta^i(p,c_{j_{i,\ell}}) = \delta^i(c_{j_{i,\ell}},p) = \pi^i(\ell)$. 
For all other pairs of candidates, 
we set the value of $\delta^i$ to be $\infty$.

Note that the resulting instance of Copeland$^\alpha$-\textsc{microbribery} 
satisfies the conditions of Lemma~\ref{lem:allalpha}. 
Therefore, we can use  Lemma~\ref{lem:allalpha} 
to compute an optimal solution to $M$. 
It is easy to see that there is a solution to $M$ which does
not involve flipping any of the preference-table entries with cost
$\infty$.
Therefore, an optimal solution to $M$ is a  
sequence $\calS = (S_1, \ldots, S_n)$, where 
for each $i=1, \dots, n$ the set
$S_i$ consists of flips that involve $p$
and candidates in $c_{j_{i,1}}, \dots, c_{j_{i,k(i)}}$ only. 

We derive a shift-action $\vecs = (s_1, \ldots, s_n)$ from
$\calS$ as follows.  For each $i=1, \dots, n$, we set $s_i = 0$
if $S_i \cap \{c_{j_{i,1}}, \ldots, c_{j_{i,k(i)}}\} = \emptyset$, and
$s_i=\max\{\ell\mid c_{j_{i,\ell}} \in S'_i\}$ otherwise.  
It is easy to see that since $\calS$ is a
solution for $M$, $\vecs$ is a shift-action that ensures that $p$ is a
winner in $I$. Further, we have
\begin{equation} \label{eq:con:1}
  \Pi(\vecs) \leq \Delta(\calS).
\end{equation}
Our algorithm outputs $\vecs$. It remains to show that
$\Delta(\calS)\le m\cdot\opt(I)$.

Let $\vect = (t_1, \ldots, t_n)$ be an optimal-cost shift-action 
that ensures $p$'s victory in $I$. From $\vect$, we can derive a
solution $\calT = (T_1, \ldots, T_n)$ to $M$ as follows.
For each $i=1, \dots, n$, if $t_i=0$, we set $T_i = \emptyset$, 
and otherwise we set
$$
T_i = \{(p, c_{j_{i,1}}), (p, c_{j_{i,2}}), \ldots, (p, c_{j_{i,t_i}})\}.
$$
Since applying the shift-action $\vect$ to $I$ makes
$p$ a winner, applying $\calT$ to $M$ makes $p$ a winner in $M$
as well. Naturally, since $\calS$ is an optimal solution for $M$, we
have 
\begin{equation}\label{eq:con:2}
  \Delta(\calS) \leq \Delta(\calT).
\end{equation}
For each $i=1, \dots, n$, let $\delta^i(T_i) = \sum_{(p, c) \in T_i}\delta^i(p,c)$. 
We have
\begin{equation}\label{eq:con:3}
    \delta^i(T_i) =  \sum_{(p, c) \in T_i}\delta^i(p,c) 
                  =  \sum_{\ell=1}^{t_i}\delta^i(p,c_{j_{i,\ell}})
     = \sum_{\ell=1}^{t_i}\pi^i(\ell) \leq t_i\pi^i(t_i) \leq m\pi^i(t_i).
\end{equation}   
Consequently, $\Delta(\calT) \leq m\Pi(\vect)$. Combining this
  inequality with~\eqref{eq:con:1} and~\eqref{eq:con:2}, we obtain
\[
  \Pi(\vecs) \leq \Delta(\calS) \leq \Delta(\calT) \leq m\Pi(\vect).
\]
Since $\vect$ is an optimal solution for $I$, the shift-action $\vecs$ is an
$m$-approximate solution for $I$. The algorithm clearly works in
polynomial time.
\end{proof}
The proof of Theorem~\ref{thm:copeland} can also be adapted for 
maximin (using the result of~\cite{elk-fal-sli:c:swap-bribery}, 
where the authors show that maximin-{\sc microbribery} is in $\p$), 
and, more generally, for any other rule
that is defined for irrational voters, has a polynomial-time 
microbribery algorithm (at least for the instances constructed in the
proof), and satisfies a certain form of monotonicity (we will not
formalize this notion of monotonicity; essentially, the voting rule
has to guarantee that our algorithm's translations between
microbribery solutions and shift-bribery solutions keep $p$ a winner).

To extend these approximability results
for Copeland and maximin to the case of weighted voters, 
we use essentially the same strategy as in the proof of Theorem~\ref{thm:scoring2}.
That is, we first show that both 
{\sc weighted} Copeland-{\sc microbribery} and
{\sc weighted}  Maximin-{\sc microbribery} admit an FPTAS
as long as the input instances satisfy the condition of Lemma~\ref{lem:allalpha}.
We then use this FPTAS as an oracle inside the algorithm
described in the proof of Theorem~\ref{thm:copeland} 
to obtain an $m(1+\eps)$-approximation scheme for our problems.
We then use the idea presented in the proof of Theorem~\ref{thm:scoring2}
to convert this approximation scheme into an $m$-approximation algorithm.
As in the proof of Proposition~\ref{prop:approx}, 
the FPTAS proceeds by executing 
$\lceil\log \sum_{i=1}^n\sum_{j=1}^{m-1}\delta^i(p, c_j)\rceil$
iterations that correspond to different guesses of the cost
of the optimal microbribery. In iteration $i$, we set $\rho=2^{i-1}$, 
and round all bribery prices that do not exceed $\rho$
up to the nearest multiple of $\eps\rho$, 
where $\eps$ is the given error parameter;
all prices that exceed $\rho$ are set to $+\infty$.
We then find an optimal microbribery $\calS(\rho)$ for the rounded prices, 
and discard it if its cost is too large relative to $\rho$.
For the rounded prices, the optimal microbribery for both Copeland
and maximin is easy to find, and the cost of the best non-discarded
microbribery provides a good approximation to the cost 
of the optimal microbribery.

Interestingly, our FPTAS for Copeland$^\alpha$ only works
for the instances of {\sc weighted} Copeland$^\alpha$-{\sc microbribery}
that satisfy the condition of Lemma~\ref{lem:allalpha}.
Indeed we can show that, in general, 
{\sc weighted} Copeland$^\alpha$-{\sc microbribery} is inapproximable
up to any factor for all rational $\alpha<1$. This result
holds even if there are only three candidates.

\begin{proposition}
For all rational $\alpha<1$ and any $K$, there does not exist
a poly-time $K$-approximation algorithm for 
{\sc weighted} {\em Copeland$^\alpha$}-{\sc microbribery} 
unless $\p=\np$.
\end{proposition}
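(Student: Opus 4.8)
The plan is to reduce from \textsc{Partition}, exploiting a phenomenon that is special to $\alpha<1$: there are vote configurations in which $p$ can be made a winner only by forcing an \emph{exact} tie between the two non-designated candidates. Since the factor $K$ is arbitrary and is not available to the reduction, we cannot hope to produce an instance with a bounded multiplicative ``gap'' in the optimal cost; instead, the reduction will produce instances that admit a finite-cost successful microbribery precisely when the \textsc{Partition} instance is a yes-instance. A poly-time $K$-approximation algorithm, run on such an instance, must output a valid (finite-cost, $p$-winning) bribery whenever one exists, and since one can verify in polynomial time whether the algorithm's output is such a bribery, this would decide \textsc{Partition}.

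Concretely, given positive integers $a_1,\dots,a_n$ with $\sum_i a_i = 2S$, I would build a three-candidate instance $M=(C,V,\Delta,p)$ with $C=\{p,x,y\}$. Using $n$ distinguished voters $u_1,\dots,u_n$, where $u_i$ has weight $a_i$, together with a small number of fixed auxiliary voters, it is routine to choose the preference tables so that, before any bribery, $p$ ties $x$, $p$ ties $y$, and $x$ beats $y$ by majority margin exactly $2S$; moreover, flipping the $(x,y)$-entry of $u_i$ shifts weight $a_i$ from the ``$x\succ y$'' side to the ``$y\succ x$'' side. Set $\delta^i(p,x)=\delta^i(p,y)=+\infty$ for every voter, and $\delta^i(x,y)=+\infty$ for every voter except the $u_i$, for which we set $\delta^{u_i}(x,y)=1$. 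Then the only finite-cost flips available are among the $u_i$'s, so a finite-cost bribery is described by a set $T\subseteq\{1,\dots,n\}$ (flipping an entry twice is never useful), after which $p$ still ties both $x$ and $y$ and the $x$-versus-$y$ margin equals $2S-2\sum_{i\in T}a_i$.

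I would then check both directions. If the \textsc{Partition} instance is a yes-instance, choose $T$ with $\sum_{i\in T}a_i=S$; after this bribery $x$ ties $y$, so $p$, $x$, and $y$ each have Copeland$^\alpha$ score $2\alpha$ and $p$ is a (co-)winner, at cost $|T|\le n$. Conversely, if it is a no-instance, then for every $T$ the margin $2S-2\sum_{i\in T}a_i$ is nonzero, so one of $x,y$ beats the other and hence has score $1+\alpha$; since $\alpha<1$ we have $1+\alpha>2\alpha$, which is $p$'s score, so $p$ is never a winner and $M$ has no finite-cost successful bribery. Consequently, a poly-time $K$-approximation algorithm for \textsc{weighted Copeland$^\alpha$-microbribery} applied to $M$ would output a valid bribery if and only if the \textsc{Partition} instance is a yes-instance, yielding $\p=\np$.

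The substantive point, and the only real obstacle, is the one addressed at the outset: because $K$ is unknown to the reduction, the hardness has to come from a feasibility dichotomy (solution versus no solution) rather than from a cost ratio, which is exactly what makes the $+\infty$ prices and the $\alpha<1$ ``exact-tie'' configuration indispensable. (For $\alpha=1$ the construction collapses, since then $1+\alpha=2\alpha$ and $p$ is already a co-winner.) Writing down the few auxiliary voters needed to realize the three prescribed initial majority margins is then routine bookkeeping, and in fact can be done using only rational (linear-order) voters.
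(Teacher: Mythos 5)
Your proposal is correct and is essentially the paper's own proof: the same reduction from \textsc{Partition} with three candidates, weights $a_i$ on voters whose only affordable flip is between the two non-designated candidates, $+\infty$ prices on every flip involving $p$, and the observation that for $\alpha<1$ the designated candidate can only win via an exact tie, so a successful finite-cost microbribery exists iff the \textsc{Partition} instance is a yes-instance. The only (immaterial) difference is that the paper prices the affordable flips at $0$ rather than $1$; both versions defeat any multiplicative approximation guarantee.
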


\begin{proof} 
The reduction is from {\sc Partition}.  An instance of {\sc
  Partition} is given by a list of $n$ positive integers $A = (a_1,
\dots, a_n)$. It is a ``yes''-instance if there is a set of indices
$J\subseteq\{1, \dots, n\}$ such that $\sum_{i\in J}
a_i=\frac{1}{2}\sum_{i=1}^n a_i$, and a ``no''-instance otherwise.

Given an instance $A=(a_1, \dots, a_s)$ of {\sc Partition}, 
we set $\sum_{i=1}^na_i=2B$, and
construct an instance of {\sc weighted} Copeland$^\alpha$-{\sc microbribery}
as follows.
There are $s+2$ voters $(v^1, \dots, v^s, v^{s+1}, v^{s+2})$
with weights $(a_1, \dots, a_s, B, B)$,
and three candidates $a$, $b$ and $p$, where $p$ is the briber's preferred
candidate. The voter's preferences are set as follows.
Voters $v^{s+1}$ and $v^{s+2}$ prefer $p$ to any other candidate;
further, $v^{s+1}$ prefers $a$ to $b$, and $v^{s+2}$ prefers $b$ to $a$. 
All other voters prefer $a$ and $b$ to $p$, and $a$ to $b$.
Further, for voters $v^{s+1}$ and $v^{s+2}$ the cost of any flip is $+\infty$, 
while for $i=1, \dots, s$ we have $\delta^i(a, b)=0$, 
$\delta^i(a, p)=\delta^i(b, p)=+\infty$.

In this instance, $p$'s score is $2\alpha$, $a$'s score
is $\alpha+1$, and $b$'s score is $\alpha$. The only way
for $p$ to win is to ensure that all candidates have $2\alpha$
points, i.e., there is a tie between $a$ and $b$.
Clearly, this can only be done if we started
with a ``yes''-instance of {\sc Partition}.
\end{proof}

For the remainder of this section let us return to unweighted shift
bribery and maximin. We have already shown that maximin-\textsc{shift
  bribery} can be solved in polynomial-time with a linear
approximation ratio, and now we will show that, in fact, a logarithmic
one is possible and asymptotically optimal. The most important
building block of our new maximin algorithm is a result of Caragiannis
et al.~\cite{car-cov-fel-hom-kak-kar-pro-ros:c:dodgson} proved in the
context of computing the Dodgson score. Thus, let us now describe this
result and then show how it can be applied to
maximin-\textsc{shift-bribery}.

Given an election $E = (C,V)$, the {\em Dodgson score} of a candidate $c \in C$ 
is the minimum number of positions by which $c$ needs to be shifted
upwards in the preference orders of the voters in $V$ to become 
a Condorcet winner. Observe that the Dodgson score of a candidate
is exactly the cost of shift bribery that makes $c$ a Condorcet winner, 
assuming that each unit shift has a unit cost.

It is known that determining whether a candidate is a winner in
Dodgson elections is an $\np$-hard
problem~\cite{bar-tov-tri:j:who-won} (in fact, it is
$\thetatwo$-complete~\cite{hem-hem-rot:j:dodgson}). Caragiannis et
al.~\cite{car-cov-fel-hom-kak-kar-pro-ros:c:dodgson} gave a
polynomial-time $O(\log m)$-approximation algorithm for computing
Dodgson scores.  In fact, their algorithm is somewhat more general and
the next theorem is a direct consequence of the results
in~\cite{car-cov-fel-hom-kak-kar-pro-ros:c:dodgson}.

\begin{theorem}
  \label{thm:dodgson}
  There is an algorithm $\calF$ that given an instance $I = (C,V,\Pi,p)$
  of $\calR$-\textsc{shift-bribery} with $|C|=m$
  and a sequence $k_1, \ldots, k_{m-1}$ of nonnegative integers, computes 
  a shift-action
  $\vect$ that has the following properties:  
  \begin{enumerate}
  \item For each $i=1, \dots, m-1$, $N_{\shift(C,V,\vect)}(p,c_i)
    \geq \min( N_{(C,V)}(p,c_i)+k_i, |V|)$.
  \item $\Pi(\vect) = O(\log m)\Pi(\vecs)$, where $\vecs$ 
    is a minimal-cost shift-action that satisfies the above condition.
  \end{enumerate}
\end{theorem}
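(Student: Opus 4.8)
The plan is to recognize the optimization problem that $\calF$ must approximately solve as an instance of the covering problem underlying the Dodgson-score approximation of~\cite{car-cov-fel-hom-kak-kar-pro-ros:c:dodgson}, and to invoke their $O(\log m)$ guarantee essentially verbatim. The voting rule $\calR$ plays no role, since property~1 is purely a statement about the pairwise tallies $N_{(C,V)}(p,c_i)$.

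First I would reformulate the task combinatorially. Fix a voter $v^\ell$, $\ell\in\{1,\ldots,n\}$, and list the candidates it ranks above $p$, starting from the one immediately above $p$ and going up, as $d^\ell_1, d^\ell_2, \ldots$. Shifting $p$ upward by $t$ positions in $v^\ell$ converts each of $d^\ell_1\succ^\ell p,\ \ldots,\ d^\ell_t\succ^\ell p$ into $p\succ^\ell d^\ell_j$, and leaves every other pairwise comparison in $v^\ell$ unchanged; hence a shift-action $\vect=(t_1,\ldots,t_n)$ raises $N_{(C,V)}(p,c_i)$ by exactly the number of voters $v^\ell$ with $c_i\in\{d^\ell_1,\ldots,d^\ell_{t_\ell}\}$. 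Next I would dispose of the cap: with $k'_i=\min\{k_i,\ |V|-N_{(C,V)}(p,c_i)\}$, a shift-action satisfies property~1 if and only if it satisfies $N_{\shift(C,V,\vect)}(p,c_i)\ge N_{(C,V)}(p,c_i)+k'_i$ for all $i$, because $N$ never exceeds $|V|$; moreover a finite-cost solution always exists (place $p$ on top of every vote). So the problem becomes: choose $t_\ell\in\Z^+$ for $\ell=1,\ldots,n$ minimizing $\sum_\ell \pi^\ell(t_\ell)$ subject to, for each $i=1,\ldots,m-1$, at least $k'_i$ voters $v^\ell$ having $c_i\in\{d^\ell_1,\ldots,d^\ell_{t_\ell}\}$.

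This is precisely the weighted covering problem that~\cite{car-cov-fel-hom-kak-kar-pro-ros:c:dodgson} extracts from Dodgson-score computation: the ground set is the collection of $m-1$ candidates $c_i$, each with covering demand $k'_i$; each voter $v^\ell$ offers a nested family of moves, where move $t$ covers $d^\ell_1,\ldots,d^\ell_t$ at cost $\pi^\ell(t)$, and at most one move may be bought per voter. I would run their algorithm on this instance and output the associated shift-action. Their analysis bounds the cost of the returned cover by $O(\log m)$ times the optimal cover cost, the logarithmic factor arising because every move covers at most $m-1$ ground-set elements; this is exactly property~2. Polynomial running time is immediate, since each voter contributes at most $m-1$ distinct moves, so the covering instance has size $\poly(|I|,m)$ and the algorithm of~\cite{car-cov-fel-hom-kak-kar-pro-ros:c:dodgson} runs in polynomial time on it.

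The main obstacle is verifying that this analysis really transfers under the two generalizations present here: arbitrary demands $k'_i$ in place of the specific Dodgson demands, and arbitrary monotone per-voter cost functions $\pi^\ell$ in place of unit swap costs. The demand generalization is harmless, as the covering argument of~\cite{car-cov-fel-hom-kak-kar-pro-ros:c:dodgson} is already phrased for general demands. For the costs, one must check that nothing beyond $\pi^\ell(0)=0$ and $\pi^\ell(t)\le\pi^\ell(t+1)$ is used, and that the nested, one-move-per-voter structure is handled: passing to the incremental viewpoint, where the $(t+1)$-st shift in $v^\ell$ costs $\pi^\ell(t+1)-\pi^\ell(t)\ge 0$, covers the single candidate $d^\ell_{t+1}$, and may be bought only after the first $t$ shifts, yields an instance to which their set-cover-style reasoning applies, and one checks that an optimal solution may always be assumed to buy a prefix of the shifts in each voter, so the precedence constraints never bind. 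The remaining ingredients — the reduction and the capping above — are routine, which is why the theorem may fairly be called a direct consequence of~\cite{car-cov-fel-hom-kak-kar-pro-ros:c:dodgson}.
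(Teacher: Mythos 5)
Your proposal matches the paper's treatment: the paper offers no standalone proof of this theorem, stating it as a direct consequence of the Caragiannis et al.\ Dodgson-score algorithm on the grounds that their covering-style greedy is "somewhat more general," which is exactly the reduction-to-their-covering-problem route you take (nested per-voter moves with monotone costs, capped demands $k'_i$, $O(\log m)$ from sets of size at most $m-1$). Your write-up in fact supplies the reduction details and the check on arbitrary demands and monotone cost functions that the paper leaves implicit, so it is the same approach, spelled out slightly more fully.
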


We can now use Theorem~\ref{thm:dodgson} to give a polynomial-time
$O(\log m)$-approximation algorithm for maximin-\textsc{shift-bribery}.
The main idea of our approach is that, for a given instance $I$ of
maximin-\textsc{shift-bribery}, it is easy to provide a description of
an optimal shift-action in the format of Theorem~\ref{thm:dodgson}.

\begin{theorem}\label{thm:maximin}
  There exists a poly-time algorithm that
  given an instance $I = (C,V,\Pi,p)$ of
  {\em maximin-}\textsc{shift-bribery} with $|C|=m$
  outputs a shift-action $\vecs$ such that $p$ is a winner in
  $\shift(C,V,\vecs)$ and $\Pi(\vecs) =O(\log m)\cdot\opt(I)$.
\end{theorem}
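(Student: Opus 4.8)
The plan is to reduce to the Dodgson-type routine of Theorem~\ref{thm:dodgson}. The crucial observation is that a shift-action (which only moves $p$ upward) changes \emph{only} the pairwise tallies $N(p,c_j)$ and, correspondingly, $N(c_j,p)=|V|-N(p,c_j)$; every tally $N(c_j,c_k)$ with $p\notin\{c_j,c_k\}$ is untouched. Hence, writing $A_j=\min_{c_k\in C\setminus\{p,c_j\}}N_{(C,V)}(c_j,c_k)$ for a quantity fixed by the input, the maximin score of $c_j$ in any election obtained from $(C,V)$ by a shift-action equals $\min\bigl(A_j,\,N(c_j,p)\bigr)$. Consequently, if in the bribed election $p$ ends up with maximin score exactly $\theta$, then $p$ is a winner if and only if (i) $N(p,c_j)\ge\theta$ for every $j$, and (ii) for every $j$ with $A_j>\theta$ we have $N(c_j,p)\le\theta$, equivalently $N(p,c_j)\ge|V|-\theta$. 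Both are \emph{lower bounds on the tallies $N(p,c_j)$}, which is exactly the target shape handled by Theorem~\ref{thm:dodgson}.

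Concretely, the algorithm iterates over all candidate values $\theta\in\{0,1,\dots,|V|\}$ for $p$'s final maximin score. For each $\theta$ it computes the $A_j$ and forms the target sequence $k_1,\dots,k_{m-1}$ with
$k_j=\max\bigl(0,\ \theta-N_{(C,V)}(p,c_j),\ \mathbf{1}[A_j>\theta]\,(|V|-\theta)-N_{(C,V)}(p,c_j)\bigr)$,
then runs $\calF$ from Theorem~\ref{thm:dodgson} on $I$ with this target to obtain a shift-action $\vect_\theta$, and finally outputs the cheapest of the $\vect_\theta$ (one checks, as below, that every $\vect_\theta$ already makes $p$ a winner, so no filtering is needed). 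This runs in polynomial time: $|V|+1$ guesses, each a polynomial-time call to $\calF$ plus elementary bookkeeping.

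For correctness I would argue two points. First, every $\vect_\theta$ is successful: by property~1 of Theorem~\ref{thm:dodgson}, the choice of $k_j$, and $0\le\theta\le|V|$, in $\shift(C,V,\vect_\theta)$ we get $N(p,c_j)\ge\theta$ for all $j$, so $p$'s maximin score is $\ge\theta$; and for each $j$, either $A_j\le\theta$ and $c_j$'s maximin score is $\le A_j\le\theta$, or $A_j>\theta$ and then $N(p,c_j)\ge|V|-\theta$, so $N(c_j,p)\le\theta$ and again $c_j$'s maximin score is $\le\theta$. Second, the cost bound: let $\vect^\ast$ be an optimal shift-action and $\theta^\ast$ be $p$'s maximin score in $\shift(C,V,\vect^\ast)$. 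Then $\vect^\ast$ already meets the target for $\theta^\ast$: indeed $N_{\shift(C,V,\vect^\ast)}(p,c_j)\ge N_{(C,V)}(p,c_j)$ (shifting $p$ up never decreases it), $\ge\theta^\ast$ (that is $p$'s maximin score there), and when $A_j>\theta^\ast$, since $p$ wins there and $A_j$ is unchanged we must have $N_{\shift(C,V,\vect^\ast)}(c_j,p)\le\theta^\ast$, i.e.\ $N_{\shift(C,V,\vect^\ast)}(p,c_j)\ge|V|-\theta^\ast$; together these give $N_{\shift(C,V,\vect^\ast)}(p,c_j)\ge\min(N_{(C,V)}(p,c_j)+k_j,|V|)$. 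Hence the minimum cost of a shift-action meeting the $\theta^\ast$-target is at most $\Pi(\vect^\ast)=\opt(I)$, so by property~2 of Theorem~\ref{thm:dodgson} we get $\Pi(\vect_{\theta^\ast})=O(\log m)\cdot\opt(I)$, and since the algorithm returns the cheapest $\vect_\theta$ its output costs $O(\log m)\cdot\opt(I)$.

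The main obstacle is getting this reduction exactly right: verifying that the maximin scores of the non-$p$ candidates are governed solely by $N(c_j,p)$ once the fixed quantities $A_j$ are isolated, and checking that the two lower-bound constraints on $N(p,c_j)$ — together with the clamping by $|V|$ built into Theorem~\ref{thm:dodgson} — simultaneously (a) suffice to make $p$ a winner for every $\theta\in\{0,\dots,|V|\}$ and (b) are satisfied by an optimal $\vect^\ast$ for the guess $\theta^\ast$ equal to its own value of $p$'s maximin score. Everything else is routine.
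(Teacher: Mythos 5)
Your proposal is correct and follows essentially the same route as the paper's proof: guess $p$'s final maximin score, translate the winning conditions into lower bounds on the tallies $N(p,c_j)$, invoke the algorithm of Theorem~\ref{thm:dodgson} for each guess, and return the cheapest result. Your explicit isolation of the fixed quantities $A_j$ and the verification that the optimal shift-action meets the target for $\theta^\ast$ just spell out details the paper treats more briefly.
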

\begin{proof}
  Let $I$ be our input instance as in the statement of the theorem and
  let $E = (C,V)$.  Our algorithm executes a series of iterations, one
  for each value $k$, $\score_E(p) \leq k \leq |V|$. The goal of the
  iteration for value $k$ is to find a shift-action $\vecr^k = (r^k_1,
  \ldots, r^k_n)$ that (a) ensures that the score of $p$ is at least
  $k$ and the score of each candidate $c \in C\setminus\{p\}$ is at most $k$,
  and (b) $\Pi(\vecr^k)$ is within $O(\log m)$ of the cost of an
  optimal-cost shift-action that achieves (a). Given the shift-actions
  produced in these iterations, the algorithm outputs one with the
  lowest cost. Since the optimal shift-action for $I$ is an optimal
  shift-action for one of the iterations, this algorithm outputs an
  $O(\log m)$-approximate solution for $I$.  It remains to show how to
  execute the iterations.
  
  Let us now fix value $k$, $\score_E(p) \leq k \leq |V|$, and
  describe the iteration for $k$.  Let us assume that $C = \{p, c_1,
  \ldots, c_{m-1}\}$.  Since a shift-action can move $p$ only, via
  applying some shift action $\vecr = (r_1, \ldots, r_n)$ to $E$ we
  can affect each of the values $N_E(p,c_1), \ldots, N_E(p,c_{m-1})$
  but neither of the values $N_E(c_i,c_j)$ for $c_i,c_j \in C$.

  Thus, shift-action $\vecr^k$ has to satisfy the following
  conditions.  First, for each $i=1, \dots, m-1$ we require
  $N_{\shift(C,V,\vecr^k)}(p,c_i) \geq k$. This guarantees that
  the maximin score of $p$ is at least $k$. Second, for each 
  $c_i \in C\setminus\{p\}$
  such that $\score_E(c_i) > k$, 
  we require $N_{\shift(C,V,\vecr^k)}(c_i,p) < k$,
  which is equivalent to demanding that
  $N_{\shift(C,V,\vecr^k)}(p,c_i) \geq |V|-k$.  This guarantees that
  the score of each candidate in $C\setminus\{p\}$ is at most $k$. We see that
  these two requirements can easily be phrased in terms of the input
  to the algorithm of Theorem~\ref{thm:dodgson}. Thus, we compute
  $\vecr^k$ using Theorem~\ref{thm:dodgson}. Clearly, the resulting 
  shift action $\vecr^k$ satisfies our requirements. 
  Since $\vecr^k$ can be
  computed in polynomial time, the proof is complete.
  \end{proof}

We remark that the approximation guaranteee
given by Theorem~\ref{thm:maximin}
is asymptotically optimal. This follows from the fact that the
reduction of \textsc{exact-cover-by-3-sets} to
maximin-\textsc{shift-bribery} given in~\cite{elk-fal-sli:c:swap-bribery}
can be modified to reduce from \textsc{set-cover}, in a way that
allows maximin-\textsc{shift-bribery} to inherit 
the inapproximability properties of \textsc{set-cover}
(see~\cite{raz:c:pcp} for
inapproximability results for \textsc{set-cover}).

\section{Inapproximability of Swap Bribery}
In contrast to shift bribery, swap
bribery is hard to approximate up to an arbitrary factor 
for all voting rules for which the possible winner problem is hard;
this includes almost all scoring 
rules, and, in particular, $k$-approval for $k\ge 2$
and Borda~\cite{bet-dor:c:possible-winner-dichotomy}, 
Copeland~\cite{xia-con:c:possible-necessary-winners}, and
maximin~\cite{xia-con:c:possible-necessary-winners}.
Indeed, in the reduction from the possible winner problem to swap bribery, 
the resulting instance of swap bribery has a bribery of cost $0$
if and only if the original instance of the possible winner
problem is a ``yes''-instance. Thus, {\em any} approximation
algorithm for swap bribery can be used to decide the possible winner problem.

\section{Conclusions}\label{sec:conclusions}
We have presented approximation algorithms 
for campaign management under a number of voting rules.
Most of our results hold even for weighted voters.
We believe that designing algorithms for the case
of weighted voters is important, since
in realistic campaign management scenarios a ``voter''
to be bribed is usually a group of voters that can be reached
by the same ad. By the same token, it would be interesting
to extend our results to settings where we can reach 
several {\em non-identical} voters with the same ad;
this would correspond to shift bribery with ``bulk discounts''.
Another, more applied direction would be to identify
commercial campaign management scenarios (where candidates
correspond to services or products) that can be handled 
using our model; the sponsored search example is the introduction 
is the first step in that direction.
Finally, a natural direction for further study is 
to design efficient algorithms for shift bribery with better 
approximation ratios, 
or to prove that our results are (asymptotically) optimal.

\medskip 
\noindent\textbf{Acknowledgments.} We would like to thank anonymous
WINE-2010 referees for their comments on the paper. We are also very
grateful to Ildik\'o Schlotter for very useful comments on the
manuscript and pointing out bugs in our Bucklin voting proofs (omitted
from this paper). 
This work was done in part during Piotr Faliszewski's visit to Nanyang
Technological University.
Edith Elkind is supported by Singapore NRF Research Fellowship
2009-08.  Piotr Faliszewski is supported in part by AGH University of
Science and Technology Grant no.~11.11.120.865, by Polish Ministry of
Science and Higher Education grant N-N206-378637, by Foundation for
Polish Science's program Homing/Powroty, and by NSF grant CCF-0426761.

\bibliography{grypiotr2006}

\appendix
\section{Analysis of a single-pass variant of $\calA$}\label{app:single-loop}
A counterintuitive property of the algorithm $\calA$ described
in Section~\ref{sec:scoring} is that it uses two {\bf for}-loops
instead of one. That is, it splits the money to be spent into two parts $\ell_1$
and $\ell_2$, greedily selects a shift-action $\vecs'$ that maximizes $p$'s score
given the budget $\ell_1$, and then greedily selects a shift-action $\vecs''$
for $\shift(C, V, \vecs')$ given the budget $\ell_2$.
It is natural to ask if a ``greedy'' single-pass algorithm given 
in Figure~\ref{fig:algA1} is still a 2-approximation algorithm for our problem.
  \begin{figure}[bth]
    \begin{center} 
      \begin{tabbing}
        123\=123\=123\=123\=123\=\kill
        \> \textbf{procedure} $\calG( C, V, \Pi, p )$\\
        \> \textbf{begin}\\
        \>\> Set $m = |C|$, $n = |V|$, $M = \sum_{i=1}^n\pi^i(m)$, $b = \infty$; \\
        \>\> \textbf{for} $\ell = 0$ \textbf{to} $M$ \textbf{do}\\
        \>\> \textbf{begin}\\
        \>\>\>  $I' = \bestbuy(I,\ell)$; \\
        \>\>\>  \textbf{if} $p$ is an $\calR$-winner in $I'$ 
                            \textbf{and} $\ell < b$ \textbf{then}\\
        \>\>\>\>  $b = \ell$;\\
        \>\> \textbf{end}\\
        \>\> \textbf{return} $b$; \\
        \> \textbf{end}
      \end{tabbing}
      \caption{\label{fig:algA1}Algorithm $\calG$.}
    \end{center}
  \end{figure} 

It is easy to see that each shift-action considered
by $\calG$ is also considered by $\calA$, so $\calG$
never produces a better solution than $\calA$.
However, the converse is not true.
Indeed, we will now show that the solution produced by
$\calG$ can be almost twice as expensive as the one 
produced by $\calA$.

\begin{theorem}\label{thm:single}
For any $\eps>0$, there is an instance 
$I = (C, V, \Pi, p)$
of {\em Borda}-\textsc{shift-bribery} such that
$\Pi(\calG(I))\ge (2-\eps)\Pi(\calA(I))$.
\end{theorem}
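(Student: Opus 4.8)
The plan is to build, for each $\eps>0$, a single \textsc{Borda-shift-bribery} instance $I$ on which $\calG$ pays $(2-\eps)$ times what $\calA$ pays. The leverage is that $\bestbuy(J,\ell)$ returns a shift-action \emph{maximizing $p$'s score} subject to the budget $\ell$, which is a different target from making $p$ a cheap winner: $\bestbuy$ is happy to sink money into whichever shifts yield the most Borda points per dollar, even when those shifts attack the wrong opponent, and — because the $p$-score-maximal allocation of a budget need not contain the $p$-score-maximal allocation of a smaller budget (the underlying optimization is knapsack-like once the cost functions are ``lumpy'') — a single call to $\bestbuy$ can be locked into such a detour. Algorithm $\calA$ escapes this: running $\bestbuy$ first with a deliberately small budget $\ell_1$ forces a strategically sound partial action (the budget is too small to afford the tempting detour), and the second $\bestbuy$ pass, now acting on the shifted instance with its residual cost functions, completes it. I will arrange $I$ so that $\opt(I)=\Pi(\calA(I))=B$ while $\bestbuy(I,\ell)$ fails to make $p$ a winner for every single budget $\ell<(2-\eps)B$; since $\calG$ returns the least budget at which $\bestbuy$ succeeds, this gives $\Pi(\calG(I))\ge(2-\eps)B=(2-\eps)\Pi(\calA(I))$.

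Concretely, $I$ will have two opponents $a$ and $b$ with large initial scores (large enough that $p$ cannot overtake either by score inflation alone, so it must pass them), together with: a pool of ``lumpy'' votes aimed at $a$ — in each, $p$ sits just below $a$ followed by a long block of filler candidates, and the cost function is flat across ``enter the vote and shift $p$ over $a$ and the whole block'', with entry cost $C$, so the only sensible move there is to grab $a$ and many filler points at once for the single fixed payment $C$ — and a pool of ordinary cheap unit-shifts aimed at $b$. Scores, block lengths, the number $k$ of lumpy votes, and all costs are tuned so that: (i) an optimal action is the ``balanced'' one that passes $b$ the prescribed number of times (the unit-shifts are more efficient per unit of the binding winning constraint, even though they are less point-dense), giving $\opt(I)=B$; (ii) $\Pi(\calA(I))=B$, since for a suitable split $\ell_1+\ell_2=B$ with $\ell_1,\ell_2<C$ neither $\bestbuy$ pass can afford a lumpy vote and the two passes together realize the balanced action; and (iii) $\Pi(\calG(I))\ge(2-\eps)B$, since for every $\ell<C$ the shift-action $\bestbuy(I,\ell)$ has done too few $b$-shifts to overtake $b$, while for every $\ell\ge C$ it greedily pours money into the point-dense lumpy votes and leaves $b$ far from beaten, so that the least $\ell$ at which $\bestbuy(I,\ell)$ produces a winner is $(2-\eps)B$. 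Pushing the constant toward $2$ is done by increasing $k$ (so each lumpy vote is worth proportionally less) while keeping the lumpy votes strictly point-dominant, aiming for a ratio $2-\Theta(1/k)$, so $k=\lceil 2/\eps\rceil$ suffices. All filler candidates are kept at the bottom of every other vote so that only $a$, $b$, $p$ ever carry relevant score, and all $\pi^i$ are non-decreasing as required.

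The main obstacle is the simultaneous calibration: the very point-density that makes $\bestbuy$ prefer the lumpy votes must not make them part of an optimal winning action (else $\opt$, $\calA$ and $\calG$ all coincide), and the entry cost $C$ must be small enough that $\calG$ is forced to use lumpy votes (so $C\le B$) yet large enough that $\calA$'s first pass cannot afford one (so $C>\ell_1$ for a viable split of $B$) — reconciling these while keeping $k$ large is the technical heart of the construction, and it is exactly what keeps the ratio at $2-\eps$ rather than a clean $2$. A secondary, routine obstacle is checking that the abstract picture translates into a legitimate instance: that passing $p$ through a filler block really is flat-priced, that the number of available passes of $a$ and of $b$ matches the opponents' initial scores, and that the residual cost functions seen by $\calA$'s second pass behave as claimed.
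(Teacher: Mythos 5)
Your overall strategy is the same as the paper's: plant ``decoy'' shifts that are marginally more point-dense than the useful ones but do not hurt the opponent that actually matters, so that $\bestbuy$ wastes its budget on them, while each of $\calA$'s two sub-budgets is too small to activate the decoy; then argue $\bestbuy(I,\ell)$ fails for every $\ell<(2-\eps)\opt(I)$. The genuine gap is in your gadget: a lumpy vote bought whole for a flat entry cost $C$ cannot be calibrated to ratio $2-\eps$ for small $\eps$. To keep both passes of $\calA$ (budgets $\ell_1+\ell_2=B$, $B=\opt(I)$) away from the lumps you need $C>B/2$; to stop $\calG$ from just buying the $N$ unit shifts at budget $B$ you need $C\le B$ and strict point-dominance $G>C/c_u$ (gain $G$ per lump, $c_u$ per unit shift). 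But then for any budget $C\le\ell<(2-\eps)B$ the score-maximizing action is $q=\lfloor\ell/C\rfloor\le 3$ lumps plus a leftover of up to $C-1>B/2$, and that leftover is spent on the unit shifts against $b$, which close the $p$-versus-$b$ gap at full rate. Checking the budgets just below $2C$ and $3C$ and just below $(2-\eps)B$, and using $G>C/c_u$ together with Lemma~\ref{lem:any2} (which, incidentally, contradicts your premise that $p$ ``cannot overtake by score inflation alone'': any action gaining twice the optimum's gain always wins), the requirements become jointly unsatisfiable once $\eps$ is small (below roughly $1/3$); in fact $\bestbuy$ already yields a winner at a budget around $1.5B$, so the ``simultaneous calibration'' you deferred as the technical heart is not merely delicate --- it cannot be done with a flat-cost lump.

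The paper's construction escapes exactly this trap by making the decoy a single vote whose cost function has an expensive first step ($T+1$) followed by a long run of $\Theta(k)$ steps (over $4k$ filler candidates) each costing $T-1$, versus $T$ for a useful unit shift. Consequently the decoy is strictly dominated whenever at most $kT=\opt(I)/2$ is spent in it, so both passes of $\calA$ (with $\ell_1=\ell_2=kT$) ignore it and $\calA$ is even optimal there; but once the budget exceeds $kT$, the score-maximizing action sinks essentially the \emph{entire} budget into the decoy, none of which lowers the unique opponent $c$, so the greedy must buy a full doubled score gain at a per-point price only marginally below the useful one, costing $4kT-3k=(2-\Theta(1/k))\opt(I)$. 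If you want to salvage your version, the decoy's advantage must be of this ``deep, marginal, activated only by concentrating more than $\opt(I)/2$ in one place'' kind, rather than an all-at-once purchase whose leftover change the greedy can spend productively.
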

\begin{proof}
Given an $\eps>0$, set $k=\lceil\frac{1}{\eps}\rceil$, 
and let $T=2k$.
We will now construct an instance $I$ 
of Borda-\textsc{shift-bribery} with
$\opt(I) = 2kT$, such that $\calA(I)$ outputs an optimal solution,
while $\calG(I)$ finds a solution of cost $4kT-3k$.
As we have $4kT-3k > 4kT-4k = 2kT(2-\frac{2}{T}) > 2kT(2-\eps)$, 
this suffices to prove the theorem.

To construct the instance $I = (C,V,\Pi,p)$, we
set 
$C = \{p,c,a_1, \ldots, a_{4k}\}$ and 
$V = (v^1, \ldots, v^{4k+2})$. 
The voters have the following preferences. 
For $i=1, \dots, 2k$,
$v^{2i-1}$ has preference order 
$c \succ p \succ a_1 \succ a_2 \succ \cdots \succ a_{4k}$ 
and $v^{2i}$ has preference order 
$c \succ p \succ a_{4k} \succ a_{4k-1} \succ \cdots \succ a_{1}$.
Voters $v^{4k+1}$ and $v^{4k+2}$ have preference orders 
$c \succ a_{4k} \succ a_{4k-1} \succ \cdots \succ a_{1} \succ p$ and 
$p \succ a_{1} \succ a_{2} \succ \cdots \succ a_{4k} \succ c$, 
respectively.  Let $E = (C,V)$.  It is easy to verify that 
the candidates have the following Borda scores:
$\score_{E}(p) = 16k^2+4k+1$, 
$\score_{E}(c) = (16k^2+4k+1)+4k$, 
and
$\score_{E}(a_i) = 8k^2+2k+1$ for $i=1, \dots, 4k$. 
That is, $c$ beats $p$ by $4k$ points, 
but $p$ has more points than any other candidate.

We will now define the sequence $\Pi = (\pi^1, \ldots, \pi^{4k+2})$ of price
functions. Since the voters $v^1, \ldots, v^{4k}$ rank $p$ 
second, we can fully specify 
$\pi^i$ for $i=1, \dots, 4k$ by setting $\pi^i(1)=T$.
Similarly, since $v^{4k+2}$ ranks $p$ first, 
the function $\pi^i$ is fully described by setting
$\pi^{4k+2}(0) = 0$. 
The price function $\pi^{4k+1}$ is defined as follows. We
have $\pi^{4k+1}(0) = 0$ and for each $j$, $1 \leq j \leq 4k+1$,
it holds that:
    \[
      \pi^{4k+1}(j) - \pi^{4k+1}(j-1)  = \left\{
      \begin{array}{ll}
           0   & \text{if $j = 0$,}  \\
        T+1  & \text{if $j = 1$,}  \\
        T   & \text{if $2 \leq j \leq k$}\\
        T-2  & \text{if $j=k+1$}\\
        T-1  & \text{if $k+2 \leq j \leq 4k+1$}
      \end{array}
      \right.
    \]
    
It is easy to see that
minimal-cost solutions involve shifting 
our preferred candidate $p$ upwards by one
position in the preference orders of $2k$ voters among those in
$V' = (v^1, \ldots, v^{4k})$: each such shift gives $p$ one extra point
and takes away one point from $c$. Thus, $\opt(I) = 2kT$.
    
Algorithm $\calA$ successfully finds an optimal solution.  
Indeed, any shift-action of cost $kT$
can increase $p$'s score in $I$ by at most $k$ and
to achieve this, we need to shift $p$ upwards by one position in some $k$
votes chosen from $V'$. Let $I'$ be the instance obtained by this
action. In $I'$, the shift-action of cost $kT$
that maximizes $p$'s score 
is obtained by shifting $p$ upwards by one position
in the votes of the $k$ ``unshifted'' voters from $V'$. 
Thus, for $\ell_1 = \ell_2 = kT$, $\calA$ finds an optimal
solution.
 
On the other hand, algorithm $\calG$ outputs the shift-action
$(0,\ldots,0,4k,0)$ of cost $T+1 + (k-1)T + 3k(T-1)-1 = 4kT-3k$.  
Indeed, if $\ell \leq kT$, the subroutine $\bestbuy(I,\ell)$
shifts $p$ upwards by one position in the
preference orders of $\lfloor \frac{\ell}{T} \rfloor$ voters among 
$V'$. On the other hand, if $\ell > kT$, $\bestbuy(I,\ell)$ 
shifts $p$ upwards as much as possible given the budget
$\ell$ in the preference order of $v^{4k+1}$
(recall that $\bestbuy(I,\ell)$
selects the cheapest shift-action among those that maximize $p$'s score
without exceeding cost $\ell$).
Moreover, by Lemma~\ref{lem:any2}, the shift
$(0,\ldots,0,4k,0)$ makes $p$ a winner, 
as it increases its score by $4k$ points, 
whereas the optimal solution increases $p$'s score by $2k$ points.
\end{proof}

Note that for the family of instances given in the proof of 
Theorem~\ref{thm:single}, $\calA$ actually produces an optimal solution, 
so $\calG$ still provides a factor $2$ approximation in this case.
In fact, it is not clear whether $\calG$
is a 2-approximation algorithm for shift bribery with respect
to the class of all scoring rules.
Note, however, that $\calG$ may behave very differently
from $\calA$: in the proof of Theorem~\ref{thm:single}, 
the shift-actions produced by $\calG$ and $\calA$ are
completely disjoint. Thus, even if $\calG$ happens to be a
$2$-approximation algorithm for our problem, 
the proof that it achieves this approximation ratio is very 
likely to require additional insights.

\end{document}